\begin{document}

\mainmatter              

\title{Decidability of an Expressive Description Logic with Rational Grading}

\titlerunning{Rational Grading in an Expressive Description Logic}

\author{Mitko Yanchev}

\authorrunning{M.~Yanchev} 

\institute{Faculty of Mathematics and Informatics, Sofia University
`St. Kliment Ohridski'\\ Bulgaria\\
\email{yanchev@fmi.uni-sofia.bg}}

\maketitle              

\begin{abstract}

In this paper syntactic objects---concept constructors called
\emph{part restrictions} which realize rational grading, are
considered in Description Logics (DLs). Being able to convey
statements about a rational part of a set of successors, part
restrictions essentially enrich the expressive capabilities of
DLs. We examine an extension of well-studied DL
$\mathcal{ALCQIH}_{R^+}$ with part restrictions, and prove that
the reasoning in the extended logic is still decidable. The proof
uses tableaux technique augmented with \emph{indices technique},
designed for dealing with part restrictions.


\end{abstract}


\section{Introduction}

Description Logics (DLs) are widely used in knowledge-based
systems. The representation in the language of transitive
relations, in different possible ways \cite{Sat96}, is important
for dealing with complex objects. \emph{Transitive roles} permit
such objects to be described by referring to their components, or
ingredients without specifying a particular level of
decomposition. The expressive power can be strengthened by
allowing additionally \emph{role hierarchies}. The DL
$\mathcal{ALCH}_{R^+}$ \cite{HG97}, an extension of well-known DL
$\mathcal{ALC}$ with both transitive roles and role hierarchies,
is shown to be suitable for implementation. Though having the same
EXPTIME-complete worst-case reasoning complexity as other DLs with
comparable expressivity, it is more amendable to optimization
\cite{Ho97}.

\emph{Inverse roles} enable the language to describe both the
whole by means of its components and vice versa, for example
$\textsf{has\_part}$ and $\textsf{is\_part\_of}$. This syntax
extension is captured in DL $\mathcal{ALCIH}_{R^+}$ \cite{HST99}.
As a next step, in \cite{HST99} the language is enriched with the
counting (or \emph{grading}---a term coming from the modal
counterparts of DLs \cite{KF72}) \emph{qualifying number
restrictions} what results in DL $\mathcal{ALCQIH}_{R^+}$. It is
given a sound and complete decision procedure for that logic.

We go further considering concept constructors which we call
\textit{part restrictions}, capable of distinguishing a rational
part of a set of successors. These constructors are analogues of
the modal operators for \emph{rational grading} \cite{TY06} which
generalize the majority operators \cite{PS04}. They are $MrR.C$
and (the dual) $WrR.C$, where $r$ is a rational number in
(0,\thinspace 1), $R$ is a role, and $C$ is a concept. The
intended meaning of $MrR.C$ is `(strongly) $M$ore than $r$-part of
$R$-successors (or $R$-\emph{neighbours}, in the presence of
inverse roles) of the current object possess the property C'. Part
restrictions essentially enrich the expressive capabilities of
DLs. From the `object domain' point of view they seem to be more
`socially' than `technically' oriented, but in any case they give
new strength to the language. The usual example of the use of part
restrictions is to express the notion of qualifying majority in a
voting system: $\textsf{M}{2\over 3}\thinspace
\textsf{voted}.\textsf{Yes}$.

On the other hand, \emph{presburger constraints} in the language
of extended modal logic EXML
\nobreak{\cite{Dem06}}, a many-relational language with only
independent relations, capture both integer and rational grading,
and have rich expressiveness. The rational grading modal operators
are expressible by the presburger constraints, and the
satisfiability of EXML is shown to be in PSPACE. Another
constraints on role successors witch subsume the standing alone
part restrictions are introduced in \cite{Baa_ALCSCC} using the
quantifier-free fragment of Boolean Algebra with Presburger
Arithmetic. The corresponding DL $\mathcal{ALCSCC}$ also captures
both integer and rational grading, and it is shown that the
complexity of reasoning in it is the same as in $\mathcal{ALCQ}$
(the DL extending $\mathcal{ALC}$ with qualifying number
restrictions), both without and with TBoxes.

A combinatorial approach to grading in modal logics, uses the so
called \emph{majority digraphs}
\nobreak{\cite{Moss_most-graph}}. In this
approach, in addition to integer and rational grading, also the
grading with real coefficients can be expressed.

Nonetheless, the use of separate rational grading, having its
place also in modal logics, proves markedly beneficial in DLs.
Part restrictions can be combined in a DL with many other
constructors. \emph{Indices technique}, specially designed for
exploring the part restrictions, allows following a common way for
obtaining decidability and complexity results as in less, so in
more expressive languages with rational grading. In particular,
reasoning complexity results---polynomial, NP, and
co-NP---concerning a range of description logics from the
$\mathcal{AL}$-family with part restrictions added, are obtained
(\cite{Y-LC2012}, \cite{Y-CiE2012}, \cite{Y-CiE2013}), as well as
PSPACE-results for modal and expressive description logics
(\cite{Y-LC2014}, \cite{Y-CiE2014}, \cite{Y-PLS2017}).

Now we consider the DL $\mathcal{ALCQPIH}_{R^+}$, in which the
language of $\mathcal{ALCQIH}_{R^+}$ is augmented with part
restrictions. We use the tableaux technique to prove that the
reasoning in the extended logic is still decidable.

\section{Syntax and Semantics of $\mathcal{ALCQPIH}_{R^+}$}

The $\mathcal{ALCQPIH}_{R^+}$-syntax and semantics differ from
those of $\mathcal{ALCQIH}_{R^+}$ only in the presence of part
restrictions. 

\begin{definition} {Let $\emph{\textbf{C}}_\emph{o}\neq\emptyset$ be a set of
\emph{concept names}, $\emph{\textbf{R}}_\emph{o}\neq\emptyset$ be
a set of \emph{role names}, some of which transitive, and
$\emph{\textbf{Q}}_0$ be a set of rational numbers in $(0,1)$. We
denote the set of transitive role names $\emph{\textbf{R}}^+$, so
that $\emph{\textbf{R}}^+\subseteq\emph{\textbf{R}}_\emph{o}$.
Then we define the set of $\mathcal{ALCQPIH}_{R^+}$-\emph{roles}
(we will refer to simply as `roles') as
$\emph{\textbf{R}}=\emph{\textbf{R}}_\emph{o} \cup \{R^-\ |\ R \in
\emph{\textbf{R}}_\emph{o}\}$, where $R^-$ is the inverse role of
$R$.}

{As the inverse relation on roles is symmetric, to avoid
considering roles such as $R^{--}$ we define a function
$\emph{\textsf{Inv}}$ which returns the inverse of a role.
Formally, $\emph{\textsf{Inv}}(R)=R^-$, if $R$ is a role name, and
$\emph{\textsf{Inv}}(R^-)=R$. Thus,
$\emph{\textsf{Inv}}(\emph{\textsf{Inv}}(R))=R$.}

{A \emph{role inclusion axiom} has the form $R\sqsubseteq S$, for
two roles $R$ and $S$, and the acyclic inclusion relation
$\sqsubseteq$. For a set of role inclusion axioms $\mathcal{R}$, a
\emph{role hierarchy} is $\mathcal{R}^+:= \bigl(\mathcal{R}\cup
\{\emph{\textsf{Inv}}(R)\sqsubseteq \emph{\textsf{Inv}}(S)\ |\
R\sqsubseteq S \in \mathcal{R}\},\ \sqsubseteq^+ \bigr)$, where
$\sqsubseteq^+$ is the reflexive and transitive closure of
$\sqsubseteq$ over $\mathcal{R}\cup
\{\emph{\textsf{Inv}}(R)\sqsubseteq \emph{\textsf{Inv}}(S)\ |\
R\sqsubseteq S \in \mathcal{R}\}$.}



A role $R$ is \emph{simple} with respect to $\mathcal{R}^+$ iff
$R\not\in \emph{\textbf{R}}^+$ and, for any $S\sqsubseteq^+ \!\!
R,\ S$ is also simple w.r.t. $\mathcal{R}^+$.

The set of $\mathcal{ALCQPIH}_{R^+}$\!-\emph{concepts} (we will
refer to simply as `concepts') is the smallest set such that: 1.
every concept name is a concept; 2. if $C$ and $D$ are concepts,
and $R$ is a role, then $\neg C,\ C\sqcap D,\ C\sqcup D,\  \forall
R.C$, and $\exists R.C$ are concepts; 3. if $C$ is a concept, $R$
is a simple role, $n\geq 0$, and $r\in \emph{\textbf{Q}}_0$, then
$\geqslant n R.C,\ \leqslant n R.C,\ MrR.C$, and $WrR.C$ are
concepts.
\end{definition}

The limitation roles in qualifying number restrictions, as well as
in part restrictions to be simple is used essentially in the
proofs. From the other side, the presence in the language of role
hierarchies together with only number restrictions on transitive
roles leads to undecidability \cite{HST99_practical}\smallskip

An \emph{interpretation} $\mathcal{I} = (\Delta^\mathcal{I},
\cdot^\mathcal{I})$ consisting of a nonempty set
$\Delta^\mathcal{I}$, called the \emph{domain} of $\mathcal{I}$,
and a function $\cdot^\mathcal{I}$ which maps every concept to a
subset of $\Delta^\mathcal{I}$ and every role to a subset of
$\Delta^\mathcal{I}\!\!\times\!\Delta^\mathcal{I}$, is defined in
a standard way.\footnote{All definitions and techniques from
Section 5 of \cite{HST99} concerning $\mathcal{ALCQIH}_{R^+}$ are
applicable to the extended DL, eventually with only mild changes.
So, in what follows we present explicitly, due to the restriction
of space, only what is new, or changed, relying on and referring
to \cite{HST99} for the rest. The complete definition of the
interpretation, the complete sets of tableaux properties and
completion rules, and all proofs can be seen in the appendix.}
We only set the additional restriction for any object
$x\in\Delta^\mathcal{I}$ and any role $R\in \textbf{R}$ the set of
objects, $R^\mathcal{I}$-related to
($R^\mathcal{I}$-\emph{neighbours} of) $x$, denoted
$R^\mathcal{I}(x)$, to be finite. $R^\mathcal{I}(x, C)$ denotes
the set $\{y\ |\ \langle x,y\rangle\in R^\mathcal{I}$ and $y\in
C^\mathcal{I}\}$ of $R^\mathcal{I}$-neighbours of $x$ which are in
$ C^\mathcal{I}$, and $\sharp M$ denotes the cardinality of a set
$M$.
For part restrictions, for any concept $C$, simple role $R$, and
$r\in \textbf{Q}_0$ the definitions of mapping are:
\medskip


$\phantom{12} (MrR. C)^\mathcal{I} = \{ x\in \Delta^\mathcal{I}\
|\ \sharp R^\mathcal{I}(x, C) > r.\sharp R^\mathcal{I}(x)\}$
\smallskip

$\phantom{12} (WrR. C)^\mathcal{I} = \{ x\in \Delta^\mathcal{I}\
|\ \sharp R^\mathcal{I}(x,\neg C) \leq r.\sharp
R^\mathcal{I}(x)\}\ \Big( = (\neg MrR.\neg C)^\mathcal{I}
\Big)$\medskip



An interpretation $\mathcal{I}$ \emph{satisfies a role hierarchy}
$\mathcal{R}^+$ iff $R^\mathcal{I}\subseteq S^\mathcal{I}$ for any
$R\sqsubseteq^+ \!\! S \in \mathcal{R}^+$; we denote that by
$\mathcal{I}\models \mathcal{R}^+$.

A concept $C$ is \emph{satisfiable} with respect to a role
hierarchy $\mathcal{R}^+$ iff there exists an interpretation
$\mathcal{I}$ such that $\mathcal{I}\models \mathcal{R}^+$ and
$C^\mathcal{I}\not = \emptyset$. Such an interpretation is called
a \emph{model of $C$ with respect to} $\mathcal{R}^+$. For an
object $x \in C^\mathcal{I}$ we say that $x$ \emph{satisfies} $C$,
also that $x$ is an \emph{instance} of $C$, while $x \in
\Delta^\mathcal{I}\backslash C^\mathcal{I}$ \emph{refuses} $C$.

Thus, for $x\in \Delta^\mathcal{I},\ x$ is in
$(MrR.C)^\mathcal{I}$ iff strictly greater than $r$ part of
$R^\mathcal{I}$-neighbours of $x$ satisfies $C$, and $x$ is in
$(WrR. C)^\mathcal{I}$ iff no greater than $r$ part of
$R^\mathcal{I}$-neighbours of $x$ refuses $C$.

A concept $D$ \emph{subsumes} a concept $C$ with respect to
$\mathcal{R}^+$ (denoted $C\sqsubseteq_{\mathcal{R}^+} \!\! D$)
iff $C^\mathcal{I}\subseteq D^\mathcal{I}$ holds for every
interpretation $\mathcal{I}$ such that
$\mathcal{I}\models\mathcal{R}^+$.

Checking the subsumption between concepts is the most general
reasoning task in DLs. From the other side, $C\! \sqsubseteq\! D$
iff $C \sqcap~\!\neg D$ is unsatisfiable. Thus, in the presence of
negation of an arbitrary concept, checking the (un)satisfiability
becomes as complex as checking the subsumption.


In what follows we consider concepts to be in the \textit{negation
normal form} (NNF). We denote the NNF of $\neg C$ by $\sim\! C$.
The NNF of $\sim\! MrR.C$ is $WrR.\neg C$, and, dually, $\sim\!
WrR.C = MrR.\neg C$. For any concept $C$ in NNF we denote with
$clos(C)$ the smallest set of concepts containing $C$ and closed
under sub-concepts and $\sim$. The size of $clos(C)$ is linear to
the size of $C$. With $\textbf{R}_C$ we denote the set of roles
occurring in $C$ and their inverses.

\section{A Tableau for $\mathcal{ALCQPIH}_{R^+}$}


We will use a tableaux algorithm to test the satisfiability of a
concept. We extend the definition of
$\mathcal{ALCQIH}_{R^+}$-tableau by modifying one property to
reflect the presence of part restrictions, and adding two new
ones. Thus we obtain a definition of a tableau for
$\mathcal{ALCQPIH}_{R^+}$.

\begin{definition} A \emph{tableau} $T$ for a concept $D$ in NNF
with respect to a role hierarchy $\mathcal{R}^+$ is a triple
($\emph{\textbf{S}}, \mathcal{L}, \mathcal{E}$), where
$\emph{\textbf{S}}$ is a set of individuals, $\mathcal{L}:
\emph{\textbf{S}} \rightarrow 2^{clos(D)}$ is a function mapping
each individual of $\emph{\textbf{S}}$ to a set of concepts which
is a subset of $clos(D)$, $\mathcal{E}: \emph{\textbf{R}}_D
\rightarrow 2^{\emph{\textbf{S}}\times \emph{\textbf{S}}}$ is a
function mapping each role occurring in $\emph{\textbf{R}}_D$ to a
set of pairs of individuals, and there is some individual $s\in
\emph{\textbf{S}}$ such that $D\in \mathcal{L}(s)$. For all
individuals $s,t\in \emph{\textbf{S}}$, concepts in $clos(D)$, and
roles in $\emph{\textbf{R}}_D$, $T$ must satisfy 13 properties.
%
\end{definition}


We denote with $R^T(s)$ the set of individuals, $R$-related to
$s$, and $R^T(s, C):=\{t\in {\textbf{S}}\ |\ \langle s,t\rangle\in
\mathcal{E}(R)\ \textrm{and}\ C \in \mathcal{L}(t)\}$. The new and
the modified properties follow. In property 13 (modified property
11 from the definition of $\mathcal{ALCQIH}_{R^+}$-tableau), and
in what follows, $\boxtimes$ is a placeholder, besides for
$\geqslant n$ and $\leqslant n$, for arbitrary $n\geq 0$, also for
$\exists$, and for $Mr$ and $Wr$, for arbitrary $r\in
\textbf{Q}_0$.\medskip

$11.\ \textrm{If}\ MrR.C\in \mathcal{L}(s),\ \textrm{then}\ \sharp
R^T(s,C)>r.\sharp R^T(s).$\smallskip

$12.\ \textrm{If}\ WrR.C\in \mathcal{L}(s),\ \textrm{then}\ \sharp
R^T(s,\sim\! C)\leq r.\sharp R^T(s).$\smallskip

$13.\ \textrm{If}\ \boxtimes R.C\in \mathcal{L}(s)\ \textrm{and}\
\langle s,t\rangle \in \ \mathcal{E}(R),\ \textrm{then}\ C\in
\mathcal{L}(t)\textrm{ or } \sim\! C\in\mathcal{L}(t).$\medskip

Having the definition of $\mathcal{ALCQPIH}_{R^+}$-tableau, we can
prove Lemma \ref{lemma:I-Ta} following the standard way, also for
the new and modified properties.

\begin{lemma}\label{lemma:I-Ta}
An $\mathcal{ALCQPIH}_{R^+}$-concept D is satisfiable with respect
to a role hierarchy $\mathcal{R}^+$ iff there exists a tableau for
D with respect to $\mathcal{R}^+$.
\end{lemma}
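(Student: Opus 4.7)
The plan is to prove both implications by adapting the standard model-construction argument from HST99. For the forward direction, starting from a model $\mathcal{I}$ of $D$ with $\mathcal{I} \models \mathcal{R}^+$ and some $s_0 \in D^{\mathcal{I}}$, I would define $T = (\mathbf{S}, \mathcal{L}, \mathcal{E})$ by $\mathbf{S} = \Delta^{\mathcal{I}}$, $\mathcal{L}(s) = \{C \in clos(D) : s \in C^{\mathcal{I}}\}$, and $\mathcal{E}(R) = R^{\mathcal{I}}$ for $R \in \mathbf{R}_D$. The eleven inherited properties are verified exactly as in HST99. The new properties 11 and 12 drop out directly from the semantic clauses for $MrR.C$ and $WrR.C$, since the cardinality inequalities imposed on $R^{\mathcal{I}}(s,\cdot)$ are literally what is demanded of $R^T(s,\cdot)$. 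Property 13 follows from bivalence in $\mathcal{I}$ together with the closure of $clos(D)$ under $\sim$.

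For the backward direction, given a tableau $T$ for $D$, I would build $\mathcal{I}$ as in HST99: $\Delta^{\mathcal{I}} = \mathbf{S}$, each concept name $A$ is interpreted as $\{s : A \in \mathcal{L}(s)\}$, and each role $R$ is interpreted by collecting the pairs from $\mathcal{E}(R')$ for all $R' \sqsubseteq^+ R$ and taking the transitive closure when $R$ is transitive. The work is then an induction on the structure of $C \in clos(D)$ showing that $C \in \mathcal{L}(s) \Rightarrow s \in C^{\mathcal{I}}$; all cases except $MrR.C$ and $WrR.C$ reproduce the HST99 argument unchanged, and one uses the HST99 tableau properties for the cases of $\exists$, $\forall$, and qualifying number restrictions on simple roles.

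The new cases are where I expect the crux, and here the decisive ingredient is the \emph{simplicity} constraint on $R$. Because $R$ is simple, no sub-role of $R$ is transitive, so the transitive-closure step contributes nothing to $R^{\mathcal{I}}(s)$; combined with the role-hierarchy tableau property inherited from HST99, this gives $R^{\mathcal{I}}(s) = R^T(s)$ as a finite set. Property 13 ensures that this common set partitions disjointly into $R^T(s,C)$ and $R^T(s,\sim\! C)$, so the cardinality inequality of property 11 (respectively 12) transfers verbatim into the semantic requirement for $MrR.C$ (resp. $WrR.C$). The main obstacle is precisely this preservation of cardinalities: rational grading is fragile under any duplication or contraction of neighbour sets, and a naive tree-like unravelling of $T$ would alter $\sharp R^{\mathcal{I}}(s)$ and destroy the required ratios. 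The simplicity restriction on $R$ in every part restriction is exactly the syntactic condition that blocks this and lets the cardinality data pass from $T$ to $\mathcal{I}$ unchanged.
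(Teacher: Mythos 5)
Your proposal is correct and follows essentially the same route as the paper: both directions use the standard HST99 constructions, with the model-to-tableau direction reading off properties 11--13 from the semantics, and the tableau-to-model direction hinging on exactly the two points the paper uses --- simplicity of roles in part restrictions forcing $R^\mathcal{I}=\mathcal{E}(R)$ (hence $\sharp R^\mathcal{I}(s)=\sharp R^T(s)$), and property 13 to bound $\sharp R^\mathcal{I}(s,\neg C)$ by $\sharp R^T(s,\sim\! C)$ in the $W$ case. The only cosmetic difference is that the paper phrases the induction via a concept norm and spells out the one-directional nature of the inductive hypothesis (which is why $M$ needs only $R^T(s,C)\subseteq R^\mathcal{I}(s,C)$ while $W$ needs the property-13 contradiction argument), but your partition argument amounts to the same thing.
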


\section{Constructing an $\mathcal{ALCQPIH}_{R^+}$-Tableau}

Lemma \ref{lemma:I-Ta} guarantees that the algorithm constructing
tableaux for $\mathcal{ALCQPIH}_{R^+}$-concepts can serve as a
decision procedure for concept satisfiability (and hence, also for
subsumption between concepts) with respect to a role hierarchy
$\mathcal{R}^+$. We present such an algorithm.

As usual with the tableaux algorithms,
$\mathcal{ALCQPIH}_{R^+}$-algorithm tries to prove the
satisfiability of a concept $D$ by constructing a \emph{completion
tree} (c.t. for short) $\textbf{T}$, from which a tableau for $D$
can be build. Each node $x$ of the tree is labelled with a set of
concepts $\mathcal{L}(x)$ which is a subset of $clos(D)$, and each
edge $\langle x,y \rangle$ is labelled with a set of roles
$\mathcal{L}(\langle x,y \rangle)$ which is a subset of
$\textbf{R}_D$. The algorithm starts with a single node (the c.t.
root) $x_0$ with $\mathcal{L}(x_0)=\{D\}$, and the tree is then
expanded by completion rules, which decompose the concepts in the
nodes' labels, and add new nodes and edges, giving the
relationships between nodes, and new labels to the nodes and
edges.

A node $y$ is an $R$-\emph{successor} of a node $x$ if $y$ is a
successor of $x$ and $S\in \mathcal{L}(\langle x,y \rangle)$ for
some $S$ with $S\sqsubseteq^+R$; $y$ is an $R$-\emph{neighbour} of
$x$ if it is an $R$-successor of $x$,
or if $x$ is an $\textsf{Inv}(R)$-successor of $y$.

We denote with $R^\textbf{T}(x)$ the set of $R$-neighbours of a
node $x$ in the c.t. $\textbf{T}$, and with $R^\textbf{T}(x,
C)$---the set of $R$-neighbours of $x$ in $\textbf{T}$ which are
labelled with $C$.



A c.t. $\textbf{T}$ is said to contain a \emph{clash} (i.e., the
obvious contradiction) if, for some node $x$ in $\textbf{T}$, a
concept $C$, a role $R$, some $n\geq0$, and some $r\in
\textbf{Q}_0$ any of the following is the case. Otherwise it is
\textit{clash-free}.\smallskip

$\phantom{123} CL1.\ \{C,\thinspace \neg C\}\subseteq
\mathcal{L}(x)$ \smallskip

$\phantom{123} CL2.\ \leqslant n R. C\in \mathcal{L}(x)$ and
$\sharp R^\textbf{T}(x,C) > n$ \smallskip

$\phantom{123} CL3.\ MrR.C\in \mathcal{L}(x) \text{ and } \sharp
R^\textbf{T}(x,C)\leq r.\sharp R^\textbf{T}(x)$\smallskip

$\phantom{123} CL4.\ WrR.C\in \mathcal{L}(x) \text{ and } \sharp
R^\textbf{T}(x,\sim\! C)>r.\sharp R^\textbf{T}(x)$\smallskip



A completion tree is \emph{complete} if none of the completion
rules is applicable, or if for some node $x$, $\mathcal{L}(x)$
contains a clash of type $CL1$ or type $CL2$.\footnote{Part
restrictions talk about no exact quantities, but ratios. So,
instances of $CL3$ and $CL4$ (which are also conditions for
applicability of $M$-rule and $W$-rule, see
Figure~\ref{figure:rules}) can appear and disappear dynamically
during the c.t. generation. That is why we exclude them from the
definition of the c.t. completeness.}

If, for a concept $D$, the completion rules can be applied in a
way to yield a complete and clash-free completion tree, then the
algorithm returns `$D$ \emph{is satisfiable}'; otherwise, it
returns `$D$ \emph{is unsatisfiable}'. \smallskip

During the expansion the algorithm uses the \emph{pair-wise
blocking} technique as defined in \cite{HST99}, Sections 4.1 and
5.3, to ensure only finite
paths in the completion tree. It also uses \emph{indices
technique} which will be presented in details, to prevent from
infinite branching of the tree (possibly) caused by part
restrictions.

Figure~\ref{figure:rules} presents the completion rules which are
new or modified in comparison with ones in the
$\mathcal{ALCQIH}_{R^+}$-algorithm. \emph{choose-rule} is
augmented (via the placeholder $\boxtimes$) to add also labels,
induced by $\exists$-concepts and part restrictions.

In the presence of part restrictions, $\geqslant$-\emph{rule}
which adds all the necessary successors at ones leads to
incompleteness.\footnote{For example, the concept $A\sqcap\exists
R^-.\big( \!\!\geqslant\! 4R.\top\sqcap\leqslant\!5R.\top\sqcap
M{2 \over 5}R.(\neg A) \sqcap W{1\over 2}R.A\big)$, where $\top =
A\sqcup\neg A$, and $A$ is a concept name, has a unique tableau
(modulo labelling of the individuals) with just four individuals.
But no complete and clash-free c.t. can be built from that tableau
using the `all-at-once' $\geqslant$-\emph{rule}.} So, it is
modified to add successors one by one, thus preventing the
occurrence of redundant neighbours. This needs some modification
of $\leqslant$-\emph{rule} also. $\leqslant$-rule transfers the
label of an edge to just one other edge. So, if two edges are
labelled with the same role, it has been labelling initially (even
if some label transfer has happened meanwhile) two different edges
connecting $x$ with two of its neighbours. The possible cases are:
1) the (concept) labels of the neighbours $y$ and $z$ are
different and contradict each other for any relabelling by
$choose$-rule, so $y$ and $z$ cannot be merged; 2) the labels of
$y$ and $z$ are different but there is a labelling by
$choose$-rule which makes them not contradicting, then there is no
need nodes to be merged, as when this labelling is made to the
firstly generated node, the second one would not be generated at
all; 3) the labels of $y$ and $z$ are identical, then the
generation of both nodes is triggered by $\geqslant\!\!n$-concept
with $n\geq 2$, $M$-, or $W$-concept, or, anyway, they are used
for the c.t.-satisfying of such a concept, so they \emph{must not}
be merged. This justifies the use of $\mathcal{L}(\langle
x,y\rangle)\cap\mathcal{L}(\langle x,z\rangle)=\emptyset$
condition in the rule.

$M$-\emph{rule} and $W$-\emph{rule} (the \emph{part rules}) are
new \emph{generating rules} (in addition to $\exists$-rule and
$\geqslant$-rule) which deal with part restrictions. The rest of
the rules---$\sqcap$-, $\sqcup$-, $\exists$-, $\forall$-, and
$\forall_+$-rule---remain just as they are in \cite{HST99},
Section 5, Figure 5.

\begin{figure}
\begin{tabular}{l l}
${choose}$- &\hspace{4mm} If 1. $\boxtimes R.C\in \mathcal{L}(x)$,
$x$ is not indirectly blocked, and\\
rule: &\hspace{7.3mm} 2. there is an $R$-neighbour $y$ of $x$
with $\{C,\sim C\}\cap\mathcal{L}(y)=\emptyset$\\
&\hspace{4mm} then $\mathcal{L}(y)\rightarrow\mathcal{L}(y)\cup\{E
\}$ for some $E\in\{C,\sim C\}$\smallskip
\end{tabular}

\begin{tabular}{l l}
$\geqslant$-rule: &\hspace{4.5mm} If 1. $\geqslant nR.C\in
\mathcal{L}(x)$,
$x$ is not blocked, and\\
&\hspace{7.7mm} 2. $\sharp  R^\textbf{T}(x,C) < n$\\
&\hspace{4.5mm} then create a new successor $y$ of $x$ with
$\mathcal{L}(\langle x,y\rangle)=\{ R\}$ and
$\mathcal{L}(y)=\{C\}$ \smallskip
\end{tabular}

\begin{tabular}{l l}
$\leqslant$-rule: &\hspace{4.5mm} If 1. $\leqslant nR.C\in
\mathcal{L}(x)$,
$x$ is not indirectly blocked, and\\
&\hspace{7.7mm} 2. $\sharp  R^\textbf{T}(x,C) > n$ and there are
two $R$-neighbours $y$ and $z$ of $x$ with\\
&\hspace{11.5mm} $C\in\mathcal{L}(y)$, $C\in\mathcal{L}(z)$,
$\mathcal{L}(\langle x,y\rangle)\cap\mathcal{L}(\langle x,z\rangle)=\emptyset$, and\\
&\hspace{11.5mm} $y$ is not a predecessor of $x$ \\
&\hspace{4.5mm} then 1.
$\mathcal{L}(z)\rightarrow\mathcal{L}(z)\cup\mathcal{L}(y)$,\\
&\hspace{12.1mm} 2. If $z$ is a predecessor of $x$\\
&\hspace{16.1mm} then $\mathcal{L}(\langle
z,x\rangle)\rightarrow\mathcal{L}(\langle z,x\rangle)\cup
\{\textsf{Inv}(S)|S\in\mathcal{L}(\langle x,y\rangle)\}$\\
&\hspace{16.1mm} else $\mathcal{L}(\langle
x,z\rangle)\rightarrow\mathcal{L}(\langle x,z\rangle)\cup
\mathcal{L}(\langle x,y\rangle)$\\
&\hspace{12.1mm} 3. $\mathcal{L}(\langle
x,y\rangle)\rightarrow\emptyset$ \smallskip
\end{tabular}

\begin{tabular}{l l}
${M}$-rule: &\hspace{3.5mm} If 1. $MrR.C\in \mathcal{L}(x)$, $x$
is not blocked, and\\
&\hspace{6.7mm} 2. $\sharp  R^\textbf{T}(x,C)\leq r.\sharp
R^\textbf{T}(x)$ then calculate $BAN_x$, and if\\
&\hspace{6.7mm} 3. $\sharp  R^\textbf{T}(x)<BAN_x$\\
&\hspace{3.5mm} then create a new successor $y$ of $x$ with
$\mathcal{L}(\langle x,y\rangle)=\{ R\}$ and $\mathcal{L}(y)=\{
C\}$\smallskip
\end{tabular}

\begin{tabular}{l l}
${W}$-rule: &\hspace{3.5mm} If 1. $WrR.C\in \mathcal{L}(x)$, $x$
is not blocked, and\\
&\hspace{6.7mm} 2. $\sharp  R^\textbf{T}(x,\sim\! C) > r.\sharp
R^\textbf{T}(x)$ then calculate $BAN_x$, and if\\
&\hspace{6.7mm} 3. $\sharp  R^\textbf{T}(x)<BAN_x$\\
&\hspace{3.5mm} then create a new successor $y$ of $x$ with
$\mathcal{L}(\langle x,y\rangle)=\{ R\}$ and $\mathcal{L}(y)=\{
C\}$
\end{tabular}

    \caption{The new and the modified completion rules}\label{figure:rules}
\end{figure}

We impose a \emph{rule application strategy} any generating rule
to be applied only if all \emph{non-generating rules} (i.e.,
$\sqcap$-, $\sqcup$-, $\forall$-, $\forall_+$-, ${choose}$- and
$\leqslant$-rule) are inapplicable. Apart from that the generation
process is non-deterministic in both which rule (in any group---of
non-generating and generating ones) to be applied, and which
concept(s) to be chosen in the non-deterministic $\sqcup$-,
${choose}$-, and $\leqslant$-rule.

The rule application strategy is essential for the successful
`work' of $\leqslant$-rule, and for part rules. It ensures that
a)~all concepts `talking' about neighbours are already present in
$\mathcal{L}(x)$, and b)~all possible (re)labelling of
neighbours of $x$ is done before the application of a part rule. 
Both are necessary for applying the indices technique for the
correct generation of successors, caused by part restrictions. The
check-up in part rules (in 3.) for not reaching the \emph{border
amount of neighbours} for the current node $x$ ($BAN_x$) is a kind
of `horizontal blocking' of the generation process, used to ensure
inapplicability of part rules after a given moment. The notion is
crucial for the termination of the algorithm, and its use is based
on Lemma~\ref{lemma:BAN}, which is the upshot of the indices
technique.

\section{Indices Technique}


We develop a specific technique, which we call \emph{indices
technique}, to cope with the presence of part restrictions. This
technique permits to extend appropriately the definition of a
clash, to design completion rules, dealing with part restriction,
and to give an adequate rule application strategy, as they are
presented in the previous section, all to guarantee the
correctness of the tableaux algorithm.

\subsection{The clashes with part restrictions}\label{CL}


$CL3$ and $CL4$, which are also conditions for applicability of
part rules, are dynamic. Applied consecutively, part rules can
`repair' one clash, and, at the same time, provoke another. Thus,
instances of $CL3$ and $CL4$ can appear and disappear, in some
cases infinitely, during the c.t. generation, even if the initial
concept is satisfiable. So, we have to take special care both to
ensure the termination of part rules application, and not to leave
avoidable `part' clashes in the completion tree. That turns out to
be the main difficulty in designing the algorithm. We overcome it
by proving that if it is possible to unfold part restrictions at a
given node avoiding \emph{simultaneously} both kinds of clashes,
it can be done within some number of neighbours. As clashes are
always connected with a single node, talking about its label and
its neighbours, that is enough to guarantee the termination. The
following subsection presents the technique in details.

%

\subsection{Counteracting part restrictions. Clusters}\label{Clusters}

We start our analysis with the simplest case when, for a node $x$
of the c.t. $\textrm{\textbf{T}}$, there are in $\mathcal{L}(x)$
only part restrictions, and they all are with the same role $R$,
and with sub-concepts which are either a fixed concept $C$, or its
negation $\sim\!C$, and $x$ is not an $\textsf{Inv}(R)$-successor.
All such part restrictions form the set:
\smallskip

$\phantom{1234567890} \{M{r_1}R.C,\ M{r_2}R.\!\sim\!C,\
W{r_3}R.C,\ W{r_4}R.\!\sim\!C\}\ \quad (1)$ \smallskip


We call the subset of ($1$) which is in $\mathcal{L}(x)$ a
\emph{cluster of $R$ and $C$ at $x$ in $\emph{\textbf{T}}$}, and
we denote it $Cl_x^\textbf{T}(R,C)$. It is obvious, that during
the generation of ($R$-) successors of $x$ (if it is necessary)
instances of $CL3$ and $CL4$ can appear only if two contradicting
part restrictions are in that cluster.


\begin{definition} {A part
restriction which is in the label of a node $x$ in a c.t.
$\emph{\textbf{T}}$ is \emph{${{\textbf{T}}}$-satisfied} (at $x$)
if there is no clash with it at $x$.\\
\indent A {cluster is $\emph{\textbf{T}}$-\emph{satisfied}} if all
part restrictions in it
are $\emph{\textbf{T}}$-satisfied.\\
\indent A {cluster is \emph{c.t.-satisfiable}} if it can be
$\emph{\textbf{T}}$-satisfied, for some c.t. $\emph{\textbf{T}}$.}
\end{definition}

In fact, in ($1$) there can be more than one part restriction of
any of the four types. But note that, if $MrR.C$ is
$\textrm{\textbf{T}}$-satisfied, then that is the case with
$M{r'}R.C$ (being in the label of the same node), for any $r' <
r$. So, we can take $r_1$ and $r_2$ to be the maximums, and, by
analogues reasons, $r_3$ and $r_4$ to be the minimums of the $r$-s
in part restrictions of the corresponding types. Thus we obtain
the upper, representative for the c.t.-satisfiability of all part
restrictions in the label of a node, set with only four ones.

The idea behind c.t.-satisfiability is that if a cluster, and more
general, the set of all part restrictions labelling a given node,
is c.t.-satisfiable, then a c.t. without clashes with part
restrictions at that node can be non-deterministically generated,
while the part rules become inapplicable for this node (as
inequality in condition 2 in part rules becomes false). So,
concerning part restrictions, c.t.-satisfiability is a sufficient
condition for obtaining a clash-free complete c.t.




Our next observation is that both $M{r_1}R.C$ and $W{r_3}R.C$ act
in the same direction concerning c.t.~generation, as the former
forces the addition of enough $R$-successors of $x$ labelled with
$C$, and the latter limits the number of $R$-successors of $x$
labelled with $\sim\!C$. The same holds for $M{r_2}R.\!\sim\!C$
and $W{r_4}R.\!\!\sim\!\!C$ with respect to $\sim\!C$. At that
time, as $M{r_1}R.C$, so $W{r_3}R.C$ counteract with any of
$M{r_2}R.\!\!\sim\!\!C$ and $W{r_4}R.\!\!\sim\!\!C$. This leads to
two main possibilities for $Cl_x^\textbf{T}(R,C)$: 

A. The cluster contains only part restrictions, acting in the same
direction (or just a single one)---we call it \textit{cluster of
type} A, or A-cluster. In the absence of counteracting part
restrictions these clusters are always c.t.-satisfiable.

B. The cluster contains at least two counteracting part
restrictions---we call it \textit{cluster of type} B, or
B-cluster.

%



In order the c.t. generation process to be able to c.t.-satisfy a
B-cluster and to avoid $CL3$ and $CL4$ clashes, the next
inequalities between the $r$-s in the cluster (or between the
\emph{indices}, from where we take the name of indices technique)
must be fulfilled---follows directly from the semantics of part
restrictions, the above remarks about counteractions, and the
definition of c.t.-satisfiability:
\smallskip

%
%
%
%
%

\begin{tabular}{l l}
\hspace{15mm} $1^\circ\ r_1+r_2<1$ & \hspace{25mm} $4^\circ\ r_3+r_4\geq 1$, what is\\

\hspace{15mm} $2^\circ\ r_1<r_4$ & \hspace{30mm}(a) $r_3+r_4>1$, or\\

\hspace{15mm} $3^\circ\ r_2<r_3$ & \hspace{30mm}(b) $r_3+r_4=1$
\smallskip
\end{tabular}

If any of the inequalities $1^\circ$--$4^\circ$ does not hold, any
complete c.t. will contain a clash, as it is impossible to
c.t.-satisfy simultaneously (at the same node) the part
restrictions in which are the indices, taking part in the failed
inequality.

We can combine that four inequalities in just one taking into
account the kind of interaction between part restrictions.
$W{r_3}R.C$ means that $\sim\!C$ has to label not greater than
$r_3$ part of all $R$-neighbours of $x$, i.e., that $C$ has to
label at least ($1-r_3$) part of them. We set
$\check{r}=\max\bigl(\{r_1,1-r_3\}\bigr)$
(or, if the part restriction with either $r_1$, or $r_3$ is not in
the cluster, $\check{r}$ is just the expression with the other).
Now, it is obvious that if $C$ labels greater than $\check{r}$
part of all $R$-neighbours of $x$, then both $M{r_1}R.C$ and
$W{r_3}R.C$ are (or the single one from the couple, which is in
the cluster, is) c.t.-satisfied. Analogues reasonings go with the
other couple of part restrictions, acting in the same direction
(the ones with $M{r_2}$ and $W{r_4}$), and (a part smaller than)
$\hat{r}=\min\bigl(\{1-r_2,r_4\}\bigr)$.

We call \textit{dominating} the part restrictions which determine
$\check{r}$ and $\hat{r}$.

It is important to note that $r_3+r_4=1$ does not spoil the
c.t.-satisfiability (unlike $r_1+r_2=1$). We exclude it from the
general examination, as a special sub-case, and discuss it
separately. Thus, case B divides into two sub-cases:

\noindent B(a). The cluster contains no counteracting $W$ part
restrictions, or $r_3+r_4\not =1$.

\noindent B(b). The cluster contains counteracting $W$ part
restrictions and $r_3+r_4=1$. \bigskip

%
%



\noindent \textbf{Clusters of type B(a).} Our first claim is:



\begin{lemma} \label{lemma-r's}
For a B(a)-cluster $Cl_x^\emph{\textbf{T}}(R,C)$, the inequalities
 $1^\circ$, $2^\circ$, $3^\circ$, and
$4^\circ$\emph{(a)}, with the corresponding part restrictions
being in the cluster, hold iff $\check{r}<\hat{r}$. 



\end{lemma}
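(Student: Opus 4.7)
The plan is to observe that the statement is essentially an unpacking of the definitions of $\check{r}$ and $\hat{r}$ as a max and a min, respectively, using the elementary fact that $\max(A) < \min(B)$ iff every element of $A$ is strictly less than every element of $B$.

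First I would spell out $\check{r}<\hat{r}$ in the generic case when all four part restrictions are in the cluster. Since $\check{r}=\max\bigl(\{r_1,\,1-r_3\}\bigr)$ and $\hat{r}=\min\bigl(\{1-r_2,\,r_4\}\bigr)$, the inequality $\check{r}<\hat{r}$ is equivalent to the conjunction of the four cross-comparisons
\[
r_1<1-r_2,\quad r_1<r_4,\quad 1-r_3<1-r_2,\quad 1-r_3<r_4.
\]
I would then match them one-to-one with the four inequalities in the statement: the first is $r_1+r_2<1$, i.e.\ $1^\circ$; the second is literally $2^\circ$; the third rearranges to $r_2<r_3$, i.e.\ $3^\circ$; and the fourth rearranges to $r_3+r_4>1$, i.e.\ $4^\circ$(a). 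Since each step is an equivalence, the two conjunctions are equivalent.

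Next I would address the degenerate cases flagged by the phrase \emph{with the corresponding part restrictions being in the cluster}. If, say, $M{r_1}R.C$ is missing, then $\check{r}$ reduces to $1-r_3$ alone and the inequalities involving $r_1$ ($1^\circ$ and $2^\circ$) are vacuous; analogously for the other three part restrictions. In each such case, removing an element from the max-set (or min-set) removes exactly the cross-comparisons in which that index appears, and this matches exactly the inequalities from $1^\circ$--$4^\circ$(a) that the statement deems relevant. So the equivalence continues to hold.

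No step here is really an obstacle; the only thing to be careful about is the bookkeeping of which inequalities survive when some of the four part restrictions are absent from the cluster, and the semantic justification for turning $W{r_3}R.C$ and $W{r_4}R.\!\sim\!C$ into the lower bound $1-r_3$ and the upper bound $r_4$ on the $C$-labelled fraction of $R$-neighbours, which has already been argued in the paragraphs preceding the lemma.
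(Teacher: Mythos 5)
Your proposal is correct and is exactly the "straightforward inspection" that the paper's proof of this lemma consists of: the four cross-comparisons arising from $\max\{r_1,1-r_3\}<\min\{1-r_2,r_4\}$ are precisely $1^\circ$--$4^\circ$(a), with the degenerate cases handled as you describe. No difference in approach; you have simply written out the details the paper leaves implicit.
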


%
%
%


\begin{corollary} \label{r<-ness} $\check{r}<\hat{r}$
is a necessary condition for the c.t.-satisfiability of a
B(a)-cluster $Cl_x^\emph{\textbf{T}}(R,C)$.
\end{corollary}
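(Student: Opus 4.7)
The plan is to derive the corollary as an almost immediate consequence of Lemma~\ref{lemma-r's}, combined with the necessity of the inequalities $1^\circ$--$4^\circ$ argued informally in the paragraph preceding the lemma. That paragraph already records the key observation: whenever any of $1^\circ$, $2^\circ$, $3^\circ$, $4^\circ$, with the corresponding part restrictions present in the cluster, fails, every complete c.t.\ contains at $x$ a $CL3$ or $CL4$ clash involving exactly the part restrictions whose indices violate that inequality. Hence the conjunction of the relevant inequalities is itself a necessary condition for c.t.-satisfiability.

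From there my proof proceeds by contraposition. I fix a B(a)-cluster $Cl_x^{\textbf{T}}(R,C)$ and assume $\check{r}\geq\hat{r}$. Invoking the right-to-left direction of Lemma~\ref{lemma-r's}, I infer that at least one of $1^\circ$, $2^\circ$, $3^\circ$, $4^\circ(\mathrm{a})$, in the form appropriate to the part restrictions actually occurring in the cluster, must fail. The necessity observation recalled above then rules out c.t.-satisfiability of the cluster, which is precisely the contrapositive of the claimed implication.

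The one point that demands attention, rather than a genuine obstacle, is the bookkeeping when not all four part restrictions from the list $(1)$ are present in the cluster: the definitions of $\check{r}$ and $\hat{r}$ are already taken over only the indices of those part restrictions that do occur, and Lemma~\ref{lemma-r's} is stated with the matching proviso ``with the corresponding part restrictions being in the cluster'', so the equivalence transports faithfully in every sub-case. The restriction to case B(a), which excludes $r_3+r_4=1$, is exactly what lets us use strict inequality $4^\circ(\mathrm{a})$ and hence the strict $\check{r}<\hat{r}$ appearing in the conclusion. Beyond this, the combinatorial content is carried entirely by the lemma; the corollary's proof only has to bridge the arithmetic condition with the semantic notion of c.t.-satisfiability.
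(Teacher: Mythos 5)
Your proof is correct and takes essentially the same route as the paper: the corollary is meant to follow immediately from Lemma~\ref{lemma-r's} together with the observation, stated just after the inequalities $1^\circ$--$4^\circ$ are introduced, that failure of any of them (for part restrictions present in the cluster) forces a clash in every complete c.t. The paper gives no separate argument for the corollary, and your contrapositive write-up, including the bookkeeping for clusters missing some of the four part restrictions, matches its intended derivation.
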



The upper inequality is also a \textit{sufficient condition} for a
cluster's c.t.-satisfi\-abi\-li\-ty. Indeed, if
$\check{r}<\hat{r}$, and the number of $R$-neighbours of $x$
labelled with $C$---$|R^\textrm{\textbf{T}}(x, C)|$---is strongly
(due to the strong inequality in the $M$-rule) between
$\check{r}.|R^\textrm{\textbf{T}}(x)|$ and
$\hat{r}.|R^\textrm{\textbf{T}}(x)|$, then the dominating part
restrictions are c.t.-satisfied, and so are the rest of the part
restrictions in the cluster, if any. This shows that
$\check{r}<\hat{r}$ guarantees the c.t.-satisfiability; practical
c.t.-satisfaction of a cluster depends on the number of
neighbours, and, of course, their appropriate labelling.

%

Note also that even though $\check{r}<\hat{r}$ holds, we can have
\emph{instable} c.t.-satisfaction, as can be seen from the next
example. Let the dominating part restrictions be $M{2\over 3}R.C$
and $M{1\over 4}R.\!\!\sim\!C$. They can be
$\textrm{\textbf{T}}$-satisfied if $R^\textrm{\textbf{T}}(x)$ has
10 nodes (with $C$ labelling 7, and $\sim\!\!C$---3 of them), and
also 11 nodes (with labelling
$C:\thinspace\sim\!\!C$---8\thinspace:\thinspace 3), while if
$R^\textrm{\textbf{T}}(x)$ has 12 nodes, there is no way these
part restrictions to be $\textrm{\textbf{T}}$-satisfied, as the
first wants $C$ to label at least 9, and the second---$\sim\!\!C$
to label at least 4 $R$-neighbours of $x$. In case of 13
$R$-neighbours of $x$ the part restrictions again can be
simultaneously $\textrm{\textbf{T}}$-satisfied.

\begin{definition}
{A cluster $Cl_x^{\emph{\textbf{T}}}(R,C)$ is}
$n$-\emph{satisfiable}, {where $n\geq 0$, if it can be
c.t.-satisfied when $x$ has exactly $n\ R$-neighbours.}

{A cluster is} \emph{stably $n$-satisfiable}, {if it is
$n$-satisfiable, and for any natural number $n'>n$ it is also
$n'$-satisfiable.}

{A cluster is} \emph{stably c.t.-satisfiable}, {if it is stably
$n$-satisfiable for some $n\geq 0$.}
\end{definition}

Note that from the above definition it follows that if a cluster
is stably $n$-satisfiable, it is also stably $n'$-satisfiable, for
any natural number $n'>n$.

In the example above the cluster is 10-, and 11-satisfiable, it is
not 12-satisfiable, and it is (in fact---stably) 13-satisfiable.

So, if we have a sufficient condition for stable n-satisfiability
of B(a)-clusters, we will know exactly when, in the
non-deterministic c.t. generation process, stable
c.t.-satisfaction of such a cluster will be achieved in at least
one non-determinis\-tic generation (we call it a \emph{successful}
generation). Then we will be able to key at that moment the part
rules with respect to the part restrictions of that cluster, thus
avoiding infinite rules application in the unsuccessful
generations.

\begin{lemma} \label{lemma_*-sat}
Let, for a B(a)-cluster $Cl_x^\emph{\textbf{T}}(R,C)$,
$\check{r}<\hat{r}$ hold. Then a sufficient condition for the
non-deterministic $|R^\emph{\textbf{T}}(x)|$-satisfiability of the
cluster is: \medskip

$\phantom{1234567890123456789012} |R^\emph{\textbf{T}}(x)| >
{1\over
{\hat{r}-\check{r}}} \quad (\sharp)$


\end{lemma}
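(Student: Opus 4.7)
My plan is to exhibit, non-deterministically, a labelling of the $n := |R^{\textbf{T}}(x)|$ neighbours of $x$ that $\textbf{T}$-satisfies every dominating part restriction in the cluster; by the discussion preceding the lemma, this automatically $\textbf{T}$-satisfies every part restriction in the cluster. Concretely, I want to pick an integer $k$ playing the role of $|R^{\textbf{T}}(x,C)|$, so that labelling exactly $k$ of the neighbours with $C$ and the remaining $n-k$ with $\sim\! C$ (via the $choose$-rule) avoids both $CL3$ and $CL4$ at $x$.

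\textbf{Finding the integer $k$.} The dominating restriction on the $\check{r}$-side demands $k > \check{r}\cdot n$ (strict in the worst case, when it is an $M$-restriction), and the dominating restriction on the $\hat{r}$-side demands $k < \hat{r}\cdot n$ or $k \leq \hat{r}\cdot n$ according to whether $\hat{r} = 1-r_2$ or $\hat{r} = r_4$. I will aim for the strictly smaller open interval $(\check{r}\cdot n,\ \hat{r}\cdot n)$. Its length is $(\hat{r}-\check{r})\cdot n$, which by the hypothesis $(\sharp)$ exceeds $1$. Any open real interval of length greater than $1$ contains an integer, so such $k$ exists; moreover $0 < \check{r}\cdot n < \hat{r}\cdot n < n$, so $k\in\{1,\dots,n-1\}$ and the required labelling is realisable on the $n$ neighbours.

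\textbf{Verifying satisfaction.} With $|R^{\textbf{T}}(x,C)| = k$ and $|R^{\textbf{T}}(x,\sim\! C)| = n-k$, I would then check each of the (at most) four restrictions in turn: $M r_1 R.C$ holds because $k > \check{r}\cdot n \geq r_1\cdot n$; $W r_3 R.C$ holds because $n-k < (1-\check{r})\cdot n \leq r_3\cdot n$; and symmetrically $M r_2 R.\!\sim\! C$ and $W r_4 R.\!\sim\! C$ via $\hat{r}$. Hence no $CL3$ or $CL4$ instance against any member of the cluster arises in this branch of the non-deterministic $choose$-rule, which is exactly $n$-satisfiability of $Cl_x^{\textbf{T}}(R,C)$.

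\textbf{Main obstacle.} The arithmetic itself is elementary; the only delicate point is tracking the strict versus non-strict inequalities contributed by $M$- versus $W$-restrictions and ensuring the chosen $k$ respects the strict ones. Uniformly aiming for the two-sided open target interval $(\check{r}\cdot n,\hat{r}\cdot n)$ dispenses with case analysis cleanly, at the cost of a possibly slightly stronger-than-minimal bound on $n$ in the pure $W$/$W$ sub-case, which is precisely the form of $(\sharp)$.
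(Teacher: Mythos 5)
Your proposal is correct and follows essentially the same route as the paper's proof: both arguments reduce the claim to finding an integer $k$ strictly inside the interval $(\check{r}\cdot n,\ \hat{r}\cdot n)$, whose length exceeds $1$ by $(\sharp)$, and then labelling $k$ neighbours with $C$ and the rest with $\sim\! C$ (the paper locates $k$ via $\lfloor\check{r}n\rfloor$ and $\lceil\hat{r}n\rceil$, you via the generic ``open interval of length $>1$ contains an integer'' fact, which is the same observation). Your closing remark about the open target interval being slightly stronger than needed in the pure $W$/$W$ sub-case matches the paper's own note following Lemma~4.
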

%
%

\begin{lemma} \label{lemma_*-stab}
Let, for a B(a)-cluster $Cl_x^\emph{\textbf{T}}(R,C)$,
$\check{r}<\hat{r}$ and $(\sharp)$ hold, and the dominating part
restrictions in the cluster be $\emph{\textbf{T}}$-satisfied. Then
any generating rule can always be applied in a way to yield
$\emph{\textbf{T}}'$ such that the cluster to be
$\emph{\textbf{T}}'$-satisfied.
\end{lemma}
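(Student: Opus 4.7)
The plan is to exhibit, for each currently applicable generating rule, a way of carrying it out (together with the subsequent forced $choose$-rule application) that preserves $\textbf{T}$-satisfaction of $Cl_x^{\textbf{T}}(R,C)$. The key numerical fact will be that the gap $\hat{r}-\check{r}$ is wide enough, by $(\sharp)$, always to admit such a choice after adding at most one new $R$-neighbour.

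First I would split on whether the new successor $y$ produced by the rule becomes an $R$-neighbour of $x$ in $\textbf{T}'$. If the role on $\langle x,y\rangle$ is neither $\sqsubseteq^+$-below $R$ nor $\sqsubseteq^+$-below $\textsf{Inv}(R)$, then $R^{\textbf{T}'}(x)$, $R^{\textbf{T}'}(x,C)$ and $R^{\textbf{T}'}(x,\sim\!C)$ all coincide with their $\textbf{T}$-counterparts, so every clash-avoidance condition for the cluster persists verbatim. In the remaining case $y$ is a new $R$-neighbour and $n':=|R^{\textbf{T}'}(x)|=n+1$ with $n=|R^{\textbf{T}}(x)|$. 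By the rule-application strategy, before any further generating step the $choose$-rule for $C$ must be applied to $y$, so $\mathcal{L}(y)$ will acquire exactly one of $C$ and $\sim\!C$; this yields the binary choice $k':=|R^{\textbf{T}'}(x,C)|\in\{k,k+1\}$, where $k=|R^{\textbf{T}}(x,C)|$.

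The core of the argument is then an integer-arithmetic check: $\textbf{T}'$-satisfaction of the dominating part restrictions amounts to $\check{r}\,n'<k'<\hat{r}\,n'$, with the inequality on the side of a $W$-dominating restriction relaxed to non-strict. If both candidate values of $k'$ failed, one would obtain simultaneously $k+1\geq\hat{r}(n+1)$ and $k\leq\check{r}(n+1)$, yielding $(\hat{r}-\check{r})(n+1)\leq 1$. But $(\sharp)$ gives $n>1/(\hat{r}-\check{r})$, hence $(\hat{r}-\check{r})(n+1)>1$: a contradiction. Thus at least one choice for $\mathcal{L}(y)$ preserves the dominating restrictions, and the non-dominating ones---being weaker consequences at the same node with the same role---are automatically preserved as well.

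The main obstacle I anticipate is the careful treatment of the strict versus non-strict endpoints: $M$-dominating restrictions contribute strict inequalities while $W$-dominating ones contribute non-strict, which makes the borderline case where $\hat{r}(n+1)$ or $\check{r}(n+1)$ is an integer delicate and forces me to verify that the unit of slack granted by $(\sharp)$ really absorbs the fractional mismatch between $\check{r}n$ and $\check{r}(n+1)$, and similarly on the upper side. A related subtlety is verifying that the generating rule itself leaves the $choose$-rule genuinely free, i.e.\ that the triggering rule's fixed initial labelling of $y$ (coming from an $\exists/\geqslant$-rule or from an $M$/$W$-rule outside the current cluster) does not prematurely commit $y$ to an incompatible one of $C$ and $\sim\!C$; this will follow from the observation that such initial labels are unrelated to $C$, so $choose$-rule still has both options at its disposal.
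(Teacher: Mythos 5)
Your proposal is correct and takes essentially the same approach as the paper: both arguments reduce preservation of the dominating part restrictions to placing $k'\in\{k,k+1\}$ inside the interval between $\check{r}(n+1)$ and $\hat{r}(n+1)$, whose width exceeds $1$ by $(\sharp)$, so at least one labelling of the new neighbour succeeds. You package this as a single contradiction (both candidates failing forces $(\hat{r}-\check{r})(n+1)\leq 1$), whereas the paper writes out the representative constructor case $\check{r}=r_1$, $\hat{r}=1-r_2$ with explicit inequality chains and declares the remaining cases analogous; that, together with your explicit dispatch of the trivial non-$R$-neighbour case, is only a presentational difference.
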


Lemma \ref{lemma_*-stab} shows that $(\sharp)$ also guarantees the
stability of the non-deterministic c.t.-sa\-tis\-fiability, namely
stable $\big(\big\lfloor{1\over {\hat{r}-\check{r}}}\big\rfloor
+1\big)$-satisfiability. It is clear that being once fulfilled,
$(\sharp)$ holds for any greater number of $R$-neighbours of $x$,
and so c.t.-sa\-tis\-fying of the dominating part restrictions can
be preserved as $R^\textbf{T}(x)$ grows.

Thus, Lemma\thinspace$\ref{lemma_*-sat}$ and
Lemma\thinspace$\ref{lemma_*-stab}$ guarantee for a
c.t.-satisfiable (with $\check{r}<\hat{r}$) B(a)-cluster
$Cl_x^\textrm{\textbf{T}}(R,C)$ that, having the number of
$R$-neighbours of $x$ equal to, or greater than
$\big\lfloor{1\over {\hat{r}-\check{r}}}\big\rfloor +1$ (what we
will call the \emph{border amount of neighbours} of $x,\ BAN_x$,
for that cluster), the cluster can be non-deterministically
c.t.-satisfied. Then, the termination of application of rules,
triggered by (the part restrictions in) that cluster, is ensured
by the check-up for $|R^\textrm{\textbf{T}}(x)|$.

Shortly said, any c.t.-satisfiable B(a)-cluster can be
non-deterministically stably c.t.-satisfied when the node has
enough many neighbours on the role in the cluster. We will rate
that in the general case for all (possibly counteracting) part
restrictions, to preserve from infinite application of part
rules.\medskip






\noindent \textbf{Clusters of type B(b).} B(b)-clusters are
determined by the equality $4^\circ$(b) $r_3+r_4=1$ for the
indices in $W$ part restrictions. These clusters are
c.t.-satisfiable if $2^\circ$ $r_1<r_4$ and $3^\circ$ $r_2<r_3$
hold (in case that the corresponding $M$ part restrictions are in
the cluster; in that case $1^\circ$ obviously also holds). Thus,
if $2^\circ$ and $3^\circ$ hold, or some $M$ part restriction is
missing, $4^\circ$(b) can be considered as a sufficient condition
for the c.t.-satisfiability of a B(b)-cluster.

%


\begin{lemma} \label{lemma_Bb}
Let, for a B(b)-cluster $Cl_x^\emph{\textbf{T}}(R,C)$, $r_1<r_4$
and $r_2<r_3$ hold, in case the corresponding $M$-part
restrictions are in the cluster. Then the cluster is
c.t.-satisfiable, and the sufficient condition it to be
non-deterministically c.t.-satisfied is the number of
$R$-neighbours of $x$ to be devisable by the denominator
of $r_3$ and $r_4$ from the cluster.
\end{lemma}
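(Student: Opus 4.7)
The plan is to reduce both claims of the lemma to a single counting argument based on the common denominator $q$ of $r_3$ and $r_4$. Because $r_3+r_4=1$, the two fractions must share a common denominator when written in lowest terms; writing $r_3=p/q$ and $r_4=(q-p)/q$ with $\gcd(p,q)=\gcd(q-p,q)=1$ fixes this $q$ unambiguously. Since the part restrictions $W{r_3}R.C$ and $W{r_4}R.\!\sim\!C$ both trigger the \emph{choose}-rule, after the non-generating rules are exhausted every $R$-neighbour of $x$ has exactly one of $C,\,\sim\!C$ in its label (clash $CL1$ forbids both). Setting $n=\sharp R^\textbf{T}(x)$, $a=\sharp R^\textbf{T}(x,C)$, and $b=\sharp R^\textbf{T}(x,\sim\!C)$, we thus have $a+b=n$.

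The two $W$-constraints demand $b\leq(p/q)n$ and $a\leq((q-p)/q)n$; summing gives $a+b\leq n$, so the identity $a+b=n$ forces \emph{both} inequalities to be equalities. That is possible only when $(p/q)n$ and $((q-p)/q)n$ are integers, which (since $\gcd(p,q)=1$) is equivalent to $q\mid n$. Conversely, for $n=kq$ the split $a=k(q-p)=r_4 n$, $b=kp=r_3 n$ c.t.-satisfies both $W$-constraints tightly; if the $M$-parts are also present, their demands $a>r_1 n$ and $b>r_2 n$ become $r_1<r_4$ and $r_2<r_3$, exactly the hypotheses of the lemma. This yields both claims at once: for c.t.-satisfiability, a concrete witness is the c.t.\ with $n=q$ neighbours, $q-p$ of them carrying $C$ and $p$ carrying $\sim\!C$; for the sufficiency claim, whenever the expansion produces $n=kq$ $R$-neighbours of $x$, the analogous non-deterministic labelling prevents any $CL3$ or $CL4$ at $x$.

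The main obstacle I anticipate is pinning down what ``non-deterministically c.t.-satisfied'' means and arguing stability. I would verify (i) that among the branches of \emph{choose}-rule there is at least one producing the exact split $(a,b)=(r_4 n,r_3 n)$, and (ii) that once this split is in place the part rules remain inapplicable at $x$, so no further generating step spoils it. Point (i) is immediate from the freedom of \emph{choose}-rule together with the rule application strategy (all non-generating rules complete before any generating rule fires). For (ii) one uses that $b=r_3 n$ falsifies the \emph{strict} inequality in $CL4$ (and in condition 2 of the $W$-rule for $W{r_3}R.C$), while $a=r_4 n>r_1 n$ and $b=r_3 n>r_2 n$ similarly falsify the conditions of $CL3$ and of the corresponding $M$-rules. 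Unlike the B(a) case there is no interval of acceptable sizes above a threshold; the divisibility requirement is sharp, which is precisely why the lemma formulates its sufficient condition as divisibility rather than as a border amount of neighbours threshold as in Lemma~\ref{lemma_*-stab}.
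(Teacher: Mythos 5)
Your proposal is correct and rests on the same underlying idea as the paper's proof: with $r_3+r_4=1$ the two $W$-constraints force the exact split $\sharp R^{\textbf{T}}(x,C)=r_4\,\sharp R^{\textbf{T}}(x)$ and $\sharp R^{\textbf{T}}(x,\sim\!C)=r_3\,\sharp R^{\textbf{T}}(x)$, which is realizable precisely when the common denominator $q$ divides the number of $R$-neighbours. The paper records this only as an ``(Idea)'' illustrated by the special case $r_3=r_4=\frac{1}{2}$ (even number of neighbours), so your fully general counting argument---including the observation that the fractions share a lowest-terms denominator and the reduction of the $M$-constraints to the hypotheses $r_1<r_4$ and $r_2<r_3$---is in fact more complete than the proof given in the paper.
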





\subsubsection{The general case}


Let us recall that the part rules require all possible
applications of non-generating rules for the current node to be
already done, what ensures all possible (for the moment) concepts,
including part restrictions, to be already present in the node's
label. Generalizing the considerations for counteracting in
clusters, also taking into account the other concepts, triggering
generating rules, and using the indices technique, we prove:

\begin{lemma} \label{lemma:BAN}
Let $x$ be a node of a completion tree \emph{\textbf{T}}, and let
all possible applications of non-generating rules for $x$ be done.
Then it can be calculated a natural number $BAN_x\geq 1$,
depending on the concepts in $\mathcal{L}(x)$, and if $x$ is a
successor of $u$, possibly depending also on the concepts in
$\mathcal{L}(u)$, and having the following property: all part
restrictions in $\mathcal{L}(x)$ which are simultaneously
\emph{\textbf{T}}-satisfiable can be non-deterministically
simultaneously \emph{\textbf{T}}-satisfied when the number of
neighbours of $x$ on any role at the uppermost level in these part
restrictions becomes equal to $BAN_x$.
\end{lemma}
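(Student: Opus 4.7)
The plan is to reduce the statement to the cluster-level analyses already established in Lemmas \ref{lemma_*-sat}, \ref{lemma_*-stab} and \ref{lemma_Bb}, and then to amalgamate the individual cluster bounds into a single natural number.

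First I would enumerate the part restrictions in $\mathcal{L}(x)$ and partition them by uppermost role $R$, and within each role by sub-concept pair $\{C,\sim C\}$, yielding the clusters $Cl_x^\mathbf{T}(R,C)$ introduced in Section~\ref{Clusters}. For every cluster I would read off a local border value $b(R,C)$ according to its type: an A-cluster demands at most one suitably labelled $R$-neighbour; a B(a)-cluster, by Lemmas~\ref{lemma_*-sat} and~\ref{lemma_*-stab}, becomes stably $\mathbf{T}$-satisfiable once $|R^\mathbf{T}(x)|\geq\lfloor 1/(\hat{r}-\check{r})\rfloor+1$; and a B(b)-cluster, by Lemma~\ref{lemma_Bb}, needs $|R^\mathbf{T}(x)|$ to be a multiple of the common denominator of $r_3$ and $r_4$. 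Clusters failing the necessary conditions lie outside the simultaneously $\mathbf{T}$-satisfiable subset assumed by the lemma and can therefore be discarded at this stage.

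Next I would combine the cluster bounds sharing the same role $R$. This is where the interaction becomes nontrivial, since distinct clusters on $R$ refer to distinct sub-concepts $C_1,\ldots,C_k$ and a single $R$-neighbour of $x$ may simultaneously fall into the $C_i$-part of some clusters and the $\sim\! C_j$-part of others. I would reformulate the requirement as the existence of a fractional distribution of $R^\mathbf{T}(x)$ over the $2^k$ conjunctions of literals on $\{C_1,\ldots,C_k\}$ that meets every cluster inequality at once; taking the least common multiple $N_R$ of the denominators of all indices occurring in $R$-part restrictions and then scaling up to the smallest multiple of $N_R$ exceeding $\max_i b(R,C_i)$ produces an integral realization, as long as each cluster is individually c.t.-satisfiable. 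Into this per-role value $BAN_x(R)$ I would further fold the lower bound forced by $\geqslant nR.D$-concepts in $\mathcal{L}(x)$, propagate through the role hierarchy the demands of part restrictions stated on super-roles of $R$ (since $R$-neighbours are also $S$-neighbours for $R\sqsubseteq^+ S$), and, if $x$ is a successor of $u$, absorb the fixed contribution of $u$ to $\mathsf{Inv}(R)^\mathbf{T}(x)$ with its labelling dictated by $\mathcal{L}(u)$.

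Finally I would set $BAN_x:=\max_R BAN_x(R)$, with $R$ ranging over the uppermost roles appearing in the part restrictions of $\mathcal{L}(x)$, observe that this number is computable from $\mathcal{L}(x)$ and possibly $\mathcal{L}(u)$ alone, and verify that once the neighbour counts reach it, the non-deterministic application of the generating rules---steered by the existence statements of Lemma~\ref{lemma_*-stab} and its B(b)-analogue---produces at least one branch on which every simultaneously $\mathbf{T}$-satisfiable part restriction is actually $\mathbf{T}$-satisfied. The main obstacle I expect lies in the middle step: showing that the fractional-distribution system for several interacting clusters on the same role admits an integral solution compatible both with the role hierarchy and with the forced $\mathsf{Inv}(R)$-edge to the predecessor, since it is there that the counteractions among $M$- and $W$-indices, the sharing of neighbours across concept pairs, and the external constraints must all be reconciled in a single combinatorial argument.
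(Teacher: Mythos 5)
Your overall skeleton---decompose by uppermost role, invoke Lemmas \ref{lemma_*-sat}, \ref{lemma_*-stab} and \ref{lemma_Bb} per cluster, fold in the $\geqslant\! n$-restrictions and the predecessor edge, use a least common multiple for the B(b) divisibility condition, and finish with a maximum over roles---is the same as the paper's. But the step you yourself flag as ``the main obstacle'' is precisely the step you never carry out, and it is the heart of the lemma: when several clusters $Cl_x^{\textbf{T}}(R,C_1),\dots,Cl_x^{\textbf{T}}(R,C_k)$ share the role $R$, a single threshold must be produced at which all of them (if jointly satisfiable) are jointly satisfied. Your proposed route---a fractional distribution over the $2^k$ literal conjunctions followed by LCM-scaling---is left as an existence claim conditioned on ``each cluster is individually c.t.-satisfiable'', which is not the right hypothesis (individually satisfiable clusters on the same role need not be jointly satisfiable, e.g.\ $M\frac{1}{2}R.(A\sqcap B)$ and $M\frac{1}{2}R.(A\sqcap\neg B)$), and in any case the integrality-plus-compatibility argument is not given. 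The paper avoids the $2^k$-cell system altogether: it normalizes every $W$-index to $\bar r = 1-r$ so that the inequalities $2^\circ$--$4^\circ$(a) all collapse to the form of $1^\circ$, lumps \emph{all} part restrictions with uppermost role $R$ into one generalized cluster of size $p$, and sets the satisfaction number to $\max\bigl(\lfloor B/(1-\sum\bar r)\rfloor + k\bigr)$ over all sub-combinations of $k\leq p$ restrictions with $\sum\bar r<1$, where $B$ is a base number of neighbours already forced to exist.

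A second discrepancy is the direction of propagation through the role hierarchy. You propose to push the demands of super-role part restrictions down into $BAN_x(R)$; the paper goes the other way: it orders the roles of $\textbf{R}^P_x$ into chains, starts from the initial roles with base number $\bar n$ (the sum of all `at least' cardinalities, including those contributed by $\exists$-concepts and by $\mathcal{L}(u)$ via the predecessor edge), and lets the base number of each super-role be the sum of the satisfaction numbers of its direct sub-roles, since neighbours generated for a sub-role are automatically neighbours on the super-role and so perturb its ratios. Finally, the paper applies the B(b) correction globally, setting $BAN_x = BAN_{A\_Ba\_cl}(x)\times LCM_1(x)$ with $LCM_1(x)$ the least common multiple of the denominators of the potentially B(b)-counteracting $W$-indices, rather than folding it in per role. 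To close your argument you would need the bottom-up base-number bookkeeping and an actual proof of the same-role combination step, not merely the observation that it is the hard part.
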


Lemma \ref{lemma:BAN} both legitimates the use of $BAN_x$ in the
part rules applicabi\-lity check-up, thus ensuring termination,
and guarantees that all simultaneously c.t.-satisfiable part
restrictions will be non-deterministically c.t.-satisfied, so that
there would not be clashes with them in the complete c.t.

Note that the border amount of neighbours can change only if
$\mathcal{L}(x)$, or $\mathcal{L}(u)$ be changed, for example by
adding of some concept to any of them caused by an application of
a rule for a successor. As the number of such possible changes is
limited by the number of concepts in $clos(D)$, after finite
number of recalculations we will obtain the final for the node $x\
BAN_x$.

\section{Correctness of the Algorithm}

As usual with tableaux algorithms we prove lemmas that the
algorithm always terminates, and that it is sound and complete.
The termination is ensured by pair-wise blocking, and by
$BAN$-checkup, which guarantees finite (at most exponential---in
case of the usual binary coding of numbers) branching at a node.
The build of a tableau from the completion tree and the reverse
follows the constructions from \cite{HST99}, Section 5.4, Lemmas
16 and 17. Since the \emph{internalization of terminologies}
\cite{Baa90} is still possible in the presence of part
restrictions, following the technique presented in \cite{HST99},
Section 3.1, we obtain finally:

\begin{theorem} \label{main-DP}
The presented tableaux algorithm is a decision procedure for the
satisfiability and subsumption of
$\mathcal{ALCQPIH}_{R^+}$-concepts with respect to role
hierarchies and terminologies.
\end{theorem}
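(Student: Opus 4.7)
The plan is to reduce the theorem to three standard ingredients---termination, soundness, and completeness of the tableaux algorithm---plus the reduction of TBox reasoning to pure concept satisfiability. Much of the work has already been prepared by Lemma~\ref{lemma:I-Ta} (tableau equivalence), Lemma~\ref{lemma:BAN} (the indices-based bound $BAN_x$), and the cited infrastructure from~\cite{HST99}.

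First I would prove \emph{termination}. As in $\mathcal{ALCQIH}_{R^+}$, pair-wise blocking bounds the depth of every path in the completion tree. The novelty here is branching: the rules $\geqslant$-rule, $\exists$-rule, $M$-rule, and $W$-rule all generate fresh successors of a node~$x$. For each such node and each role~$R$ occurring at the uppermost level of a part restriction in $\mathcal{L}(x)$, Lemma~\ref{lemma:BAN} provides a finite number $BAN_x$ bounding $\sharp R^{\mathbf{T}}(x)$ beyond which the part rules are guaranteed to be inapplicable; similarly the $\geqslant$- and $\exists$-rules are bounded by the numerical parameters appearing in $\mathcal{L}(x)$. Because $BAN_x$ depends only on concepts in $clos(D)$ (and in $\mathcal{L}(u)$ for the predecessor~$u$), it can be recomputed only finitely many times, so eventually every node has finite out-degree. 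Combined with bounded depth this yields a finite completion tree, and hence termination in (at most) exponential time under binary coding of numbers.

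Next I would prove \emph{soundness}: if the algorithm produces a complete and clash-free c.t.~$\mathbf{T}$ for~$D$, then~$D$ has a tableau, and hence (by Lemma~\ref{lemma:I-Ta}) a model. Following the construction of Lemma~17 of~\cite{HST99}, one unravels $\mathbf{T}$ through blocking pairs into a triple $(\mathbf{S},\mathcal{L},\mathcal{E})$ and verifies the 13 tableau properties. The classical properties carry over unchanged; the new content is properties~11, 12 (for $MrR.C$ and $WrR.C$) and the modified property~13. These are exactly the negations of the clash conditions $CL3$ and $CL4$: since $\mathbf{T}$ is clash-free, no such violation exists at an un-blocked node, and pair-wise blocking of simple-role neighbours transfers the property across a block because both members of a blocking pair have identical labels on identical edges. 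For \emph{completeness} I would argue contrapositively: starting from a tableau $T$ for $D$ (which exists by Lemma~\ref{lemma:I-Ta} if $D$ is satisfiable), one shows (as in Lemma~16 of~\cite{HST99}) that $T$ can be used as an oracle to steer every non-deterministic choice---in $\sqcup$-rule, $choose$-rule, $\leqslant$-rule, and crucially the new $M$- and $W$-rules---so that the invariants defining a simulation of $T$ by the growing c.t.~are preserved. This is the step at which the indices technique does its real work: Lemmas~\ref{lemma_*-sat},~\ref{lemma_*-stab},~\ref{lemma_Bb}, and~\ref{lemma:BAN} guarantee that once $BAN_x$ many $R$-neighbours have been generated, the oracle-guided labelling stably c.t.-satisfies every simultaneously satisfiable cluster at~$x$, so $CL3$ and $CL4$ disappear permanently and the part rules become inapplicable.

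Finally, to extend from concept satisfiability w.r.t.\ $\mathcal{R}^+$ to satisfiability w.r.t.\ a general TBox, I would appeal to the internalisation technique of~\cite{Baa90} as adapted in~\cite{HST99}, Section~3.1: introduce a fresh transitive super-role $U$ of every role in $\mathbf{R}_D$ and of their inverses, and encode the conjunction of the TBox axioms as a single concept propagated along $U$. Since $U$ is not simple it never appears in number or part restrictions, so nothing in the preceding analysis is disturbed; the tests reduce to $\mathcal{ALCQPIH}_{R^+}$-concept satisfiability w.r.t.\ a role hierarchy, which our algorithm decides. Subsumption reduces to unsatisfiability of $C\sqcap\sim\! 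D$ in the usual way.

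The main obstacle I expect is the completeness step, specifically verifying that the oracle-guided application of the $M$- and $W$-rules never forces an irrecoverable violation of the $\leqslant$- or $\geqslant$-rules (or vice versa) before $BAN_x$ is reached. This is precisely what the indices technique and the rule-application strategy (non-generating rules first, part rules gated by $BAN_x$) were engineered for, and the argument reduces to invoking Lemma~\ref{lemma:BAN} node-by-node.
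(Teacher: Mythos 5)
Your proposal is correct and follows essentially the same route as the paper: termination via pair-wise blocking plus the $BAN_x$ bound on branching, soundness via the path/unravelling construction verifying the 13 tableau properties (with the new properties 11--13 following from clash-freeness and the preservation of neighbour counts across blocking pairs), completeness via a tableau-guided function steering the non-deterministic rules, and internalization of terminologies through a non-simple universal super-role. The only discrepancy is cosmetic---you attribute the soundness construction to Lemma~17 and the completeness construction to Lemma~16 of~\cite{HST99}, whereas the paper uses them the other way around.
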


\section{Conclusion}

DL $\mathcal{ALCQPIH}_{R^+}$ augments $\mathcal{ALCQIH}_{R^+}$
with the ability to express rational grading. We showed that the
decision procedure for the latter logic can be naturally extended
to capture the new one. This indicates that the approach which
realizes rational grading independently from integer grading is
fruitful, and can be applied even to expressive description logics
to give in a convenient way their rational grading extensions,
still keeping the decidability.\bigskip

\noindent \textbf{Acknowledgments}\smallskip

\noindent We would like to thank the anonymous reviewers for their
valuable questions, remarks and suggestions.


\newpage
\label{sect:bib}
\bibliographystyle{plain}
\bibliography{Yanchev_ref}

\newpage
\section*{Appendix}

In this appendix we present the proofs of Lemmas 1--6 and Theorem
1---complete, only for the new cases, or just sketched. For better
readability we also give the complete definition of the
interpretation function and the complete set of the tableaux
properties, the set of clashes and the complete set of completion
rules, and repeat the propositions.
\bigskip

For an \emph{interpretation} $\mathcal{I} = (\Delta^\mathcal{I},
\cdot^\mathcal{I})$, for any concepts $C,\ D$, role $R$, $n\geq
0$, and rational number $r\in \textbf{Q}_0$, the complete
inductive definition of the interpretation function
$\cdot^\mathcal{I}$ follows. $R^\mathcal{I}(x)$ denotes the set of
objects which are $R^\mathcal{I}$-related to
($R^\mathcal{I}$-neighbours of) the object $x$, $R^\mathcal{I}(x,
C)$ denotes the set of $R^\mathcal{I}$-neighbours of $x$ which are
in $ C^\mathcal{I}$, i.e., the set $\{y\ |\ \langle x,y\rangle\in
R^\mathcal{I}$ and $y\in C^\mathcal{I}\}$, and $\sharp M$ denotes
the cardinality of a set $M$.
\begin{align*}
A^\mathcal{I} &\subseteq \Delta^\mathcal{I} \textrm{ for any } A\in\textbf{C}_0 \\
(\neg C)^\mathcal{I} &= \Delta^\mathcal{I}\backslash C^\mathcal{I} \\
( C\sqcap D)^\mathcal{I} &= C^\mathcal{I}\cap D^\mathcal{I}\\
( C\sqcup D)^\mathcal{I} &= C^\mathcal{I}\cup D^\mathcal{I}\\
(\exists R. C)^\mathcal{I} &= \{x\in \Delta^\mathcal{I}\ |
\textrm{ there is some } y \in\Delta^\mathcal{I} \textrm{ with }
\langle x,y\rangle \in R^\mathcal{I} \textrm{ and } y
\in C^\mathcal{I}\}\\
(\forall R. C)^\mathcal{I} &= \{x\in \Delta^\mathcal{I}\ |
\textrm{ for all } y\in\Delta^\mathcal{I}, \textrm{ if } \langle
x,y\rangle \in R^\mathcal{I}\textrm{ then } y \in C^\mathcal{I}\}\\
(\leqslant nR. C)^\mathcal{I} &= \{x\in \Delta^\mathcal{I}\ |\
\sharp R^\mathcal{I}(x, C) \leq n \} \\
(\geqslant nR. C)^\mathcal{I} &= \{x\in \Delta^\mathcal{I}\ |\
\sharp R^\mathcal{I}(x, C) \geq n \}\\
(MrR. C)^\mathcal{I} &= \{ x\in \Delta^\mathcal{I}\ |\
\sharp R^\mathcal{I}(x, C) > r.\sharp R^\mathcal{I}(x)\} \\
(WrR. C)^\mathcal{I} &= \{ x\in \Delta^\mathcal{I}\ |\ \sharp
R^\mathcal{I}(x,\neg C) \leq r.\sharp R^\mathcal{I}(x)\}
\end{align*}

Also, for any $S\in \textbf{R}$ and any $R\in \textbf{R}^+$ we
define:
\begin{align*}
\langle x,y\rangle\in S^\mathcal{I}\ & \textrm{iff}\ \langle
y,x\rangle\in (\textsf{Inv}(S))^\mathcal{I}\\
\textrm{if}\ \langle x,y\rangle\in R^\mathcal{I}\ \textrm{and}\
\langle y,z\rangle\in R^\mathcal{I}\ & \textrm{then}\ \langle
x,z\rangle\in R^\mathcal{I}
\end{align*}

\bigskip


A \emph{tableau} $T=(\textbf{S}, \mathcal{L}, \mathcal{E})$ for a
concept $D$ in NNF with respect to a role hierarchy
$\mathcal{R}^+$ must satisfy, for all individuals $s,t\in
\textbf{S}$, concepts $C,C_1,C_2 \in clos(D)$, and roles $R,S\in
\emph{\textbf{R}}_D$, the following 13 properties.

We denote with $R^T(s)$ the set of individuals, $R$-related to
$s$, and with  $R^T(s, C)$ ---the set of individuals, $R$-related
to $s$ and labelled with $C$, i.e., $R^T(s, C):=\{t\in
{\textbf{S}}\ |\ \langle s,t\rangle\in \mathcal{E}(R)\
\textrm{and}\ C \in \mathcal{L}(t)\}$.

$\boxtimes$ is a placeholder for $\geqslant n$ and $\leqslant n$,
for arbitrary $n\geq 0$, for $\exists$, and for $Mr$ and $Wr$, for
arbitrary $r\in \textbf{Q}_0$.\medskip

$\phantom{0}1.\ \textrm{If}\ C\in \mathcal{L}(s),\ \textrm{then}\
\neg C \not\in \mathcal{L}(s).$\medskip

$\phantom{0}2.\ \textrm{If}\ C_1\sqcap C_2 \in \mathcal{L}(s),\
\textrm{then}\ C_1\in \mathcal{L}(s) \textrm{ and } C_2 \in
\mathcal{L}(s).$\medskip

$\phantom{0}3.\ \textrm{If}\ C_1\sqcup C_2 \in \mathcal{L}(s),\
\textrm{then}\ C_1\in \mathcal{L}(s) \textrm{ or } C_2 \in
\mathcal{L}(s).$\medskip

$\phantom{0}4.\ \textrm{If}\ \forall R.C\in \mathcal{L}(s)
\textrm{ and } \langle s,t \rangle \in\mathcal{E}(R),\
\textrm{then}\ C\in\mathcal{L}(t).$\medskip

$\phantom{0}5.\ \textrm{If}\ \exists R.C\in \mathcal{L}(s),\
\textrm{then there is some}\ t\in\textbf{S} \textrm{ such that }
\langle s,t \rangle \in\mathcal{E}(R) \textrm{ and } \\
\indent\phantom{05.}\ C \in\mathcal{L}(t).$\medskip

$\phantom{0}6.\ \textrm{If}\ \forall S.C\in \mathcal{L}(s)
\textrm{ and } \langle s,t \rangle \in\mathcal{E}(R) \textrm{ for
some } R\sqsubseteq^+ S \textrm{ with } \textsf{Trans}(R),\ \textrm{then}\ \\
\indent\phantom{06.}\ \forall R.C\in\mathcal{L}(t).$\medskip

$\phantom{0}7.\ \langle s,t \rangle \in\mathcal{E}(R) \textrm{ iff
}  \langle t,s \rangle \in\mathcal{E}\big ( \textsf{Inv}(R)\big
).$\medskip

$\phantom{0}8.\ \textrm{If}\ \langle s,t \rangle \in\mathcal{E}(R)
 \textrm{ and } R\sqsubseteq^+ S, \textrm{ then } \langle s,t \rangle
 \in\mathcal{E}(S).$\medskip

$\phantom{0}9.\ \textrm{If}\ \leqslant nR.C\in \mathcal{L}(s),\
\textrm{then}\ \sharp R^T(s,C)\leq n.$\medskip

$10.\ \textrm{If}\ \geqslant nR.C\in \mathcal{L}(s),\
\textrm{then}\ \sharp R^T(s,C)\geq n.$\medskip

$11.\ \textrm{If}\ MrR.C\in \mathcal{L}(s),\ \textrm{then}\ \sharp
R^T(s,C)>r.\sharp R^T(s).$\medskip

$12.\ \textrm{If}\ WrR.C\in \mathcal{L}(s),\ \textrm{then}\ \sharp
R^T(s,\sim\! C)\leq r.\sharp R^T(s).$\medskip

$13.\ \textrm{If}\ \boxtimes R.C\in \mathcal{L}(s)\ \textrm{and}\
\langle s,t\rangle \in \ \mathcal{E}(R),\ \textrm{then}\ C\in
\mathcal{L}(t)\textrm{ or } \sim\! C\in\mathcal{L}(t).$

\bigskip


\noindent \textbf{Lemma~1.} \emph{An
$\mathcal{ALCQPIH}_{R^+}$-concept D is satisfiable with respect to
a role hierarchy $\mathcal{R}^+$ iff there exists a tableau for D
with respect to $\mathcal{R}^+$.}

\begin{proof}
We extend the proof of Lemma~14 in \cite{HST99} for the modified
and the new cases in both the interpretation and the tableau.

For the \emph{if} direction, let $T=({\textbf{S}}, \mathcal{L},
\mathcal{E})$ be a tableau for the tested concept $D$ with $D\in
\mathcal{L}(s_0)$. We define a model $\mathcal{I} =
(\Delta^\mathcal{I}, \cdot^\mathcal{I})$ as in \cite{HST99}:
\begin{eqnarray*}
  \Delta^\mathcal{I} &=& \textbf{S} \\
  A^\mathcal{I} &=& \{s\ |\ A\in\mathcal{L}(s)\},\textrm{ for all }A
  \in\textbf{C}_0\cap clos(D) \medskip \\
  R^\mathcal{I} &=& \Biggl \{
  \begin{array}{ll}
    \mathcal{E}(R)^+, & \textrm{ if }R\textrm{ is transitive} \smallskip \\
    \mathcal{E}(R)\cup
    \underset{P\sqsubseteq^+R,P\not =R}\bigcup  P^\mathcal{I}, & \textrm{ otherwise}
  \end{array}
\end{eqnarray*}

In \cite{HST99}, Lemma~14 it is noted that
$\mathcal{I}\models\mathcal{R}^+$ due to Property 8 of tableaux
definition. To complete the proof that $\mathcal{I}$ is a model of
$D$ w.r.t. $\mathcal{R}^+$, we have to show that from
$C\in\mathcal{L}(s)$ it follows that $s\in C^\mathcal{I}$ for any
$s\in \textbf{S}$, in case that $C$ is a part restriction, what
will yield $D^\mathcal{I}\not = \emptyset$. We extend the
inductive definition of the norm $\|C\|$ of a concept $C$ in NNF
to cover the part restrictions:
\begin{eqnarray*}
\begin{array}{lll}
  \|A\| & :=\ \|\neg A\| & :=\  0\textrm{ for }A\in \textbf{C}_0\\
  \|C_1\sqcap C_2\| & :=\  \|C_1\sqcup C_2\| & :=\  1+\|C_1\|+\|C_2\| \\
  \|\forall R.C\| & :=\  \|\exists R.C\| & :=\  1+\|C\| \\
  \|\geqslant nR.C\| & :=\  \|\leqslant nR.C\| & :=\  1+\|C\| \\
  \|MrR.C\| & :=\  \|WrR.C\| & :=\  1+\|C\| \\
\end{array}
\end{eqnarray*}

We consider the inductive steps with part restrictions. Recall the
limitation that roles in part restrictions are simple, so for a
role $R$ in a part restriction it holds
$R^\mathcal{I}=\mathcal{E}(R)$.

\begin{itemize}
    \item [$\bullet$] $C=MrR.E$.
    We have to show for an $s$ with $MrR.E\in \mathcal{L}(s)$
    that $s\in (MrR.E)^\mathcal{I}$, or that
    $\sharp R^\mathcal{I}(s,E)>r.\sharp R^\mathcal{I}(s)$.
    From $MrR.E\in \mathcal{L}(s)$ we have that $\sharp R^T(s,E)>r.\sharp
    R^T(s)$. As $R$ is simple, $t\in R^\mathcal{I}(s)$
    iff $\langle s,t \rangle\in
    R^\mathcal{I}$ iff $\langle s,t \rangle\in
    \mathcal{E}(R)$ iff $t\in R^T(s)$, so $\sharp R^\mathcal{I}(s)=\sharp R^T(s)$.
    From the inductive hypothesis for $E$, from $E\in
    \mathcal{L}(t)$ it follows that $t\in E^\mathcal{I}$,
    so $\sharp R^\mathcal{I}(s,E)\geq \sharp R^T(s,E)$. Thus, we obtain
    $\sharp R^\mathcal{I}(s,E) \geq \sharp R^T(s,E) > r.\sharp R^T(s)$ =
    $r.\sharp R^\mathcal{I}(s)$. \medskip

    \item [$\bullet$] $C=WrR.E$.
    We have to show for an $s$ with $WrR.E\in \mathcal{L}(s)$
    that $s\in (WrR.E)^\mathcal{I}$, or that
    $\sharp R^\mathcal{I}(s,\sim\!\! E)\leq r.\sharp R^\mathcal{I}(s)$.
    From $WrR.E\in \mathcal{L}(s)$ we have that $\sharp R^T(s,\sim\!\! E)\leq r.\sharp
    R^T(s)$. By the definition, $t\in
    R^T(s)$ iff $\langle s,t \rangle\in\mathcal{E}(R)$ iff $\langle s,t \rangle\in
    R^\mathcal{I}$ (as $R$ is simple, $R^\mathcal{I}=\mathcal{E}(R)$) iff
    $t\in R^\mathcal{I}(s)$. So $R^T(s)=R^\mathcal{I}(s)$. Let
    assume that $\sharp R^T(s,\sim\!\! E) < \sharp R^\mathcal{I}(s,\sim\!\!
    E)$. Then there is some $t\in R^\mathcal{I}(s) = R^T(s)$ with
    $t\in (\sim\!\! E)^\mathcal{I}=\Delta^\mathcal{I}\backslash
    E^\mathcal{I}$ and $\sim\!\! E\not\in\mathcal{L}(t)$. From
    property 13 of the tableaux definition it follows that $E\in\mathcal{L}(t)$,
    so, from the inductive hypothesis for $E$, $t\in E^\mathcal{I}$,
    what is a contradiction with $t\in \Delta^\mathcal{I}\backslash E^\mathcal{I}$.
    So we have $\sharp R^\mathcal{I}(s,\sim\!\! E) \leq \sharp R^T(s,\sim\!\! E)
    \leq r.\sharp R^T(s) = r.\sharp R^\mathcal{I}(s)$.
\end{itemize}

For the \emph{only if} direction, for a satisfiable concept $D$
with a model $\mathcal{I} = (\Delta^\mathcal{I},
\cdot^\mathcal{I})$, $\mathcal{I}\models\mathcal{R}^+$, we again
define a tableau as in \cite{HST99}, Lemma 14:
\begin{eqnarray*}
  \textbf{S} &=& \Delta^\mathcal{I} \\
  \mathcal{L}(s) &=& \{C\in clos(D)\ |\ s\in C^\mathcal{I}\} \\
  \mathcal{E}(R) &=& R^\mathcal{I}
\end{eqnarray*}

Then, the new properties 11. and 12. are satisfied due to the
corresponding definitions in the semantics, and the modified
property 13. is also satisfied as a consequence of the definition
of semantics of concepts. $\circlearrowleft$\bigskip
\end{proof}

\noindent \textbf{Lemma~2.} \emph{For a B(a)-cluster
$Cl_x^\emph{\textbf{T}}(R,C)$,} \emph{the inequalities $1^\circ$,
$2^\circ$, $3^\circ$, and $4^\circ$\emph{(a)}, with the
corresponding part restrictions being in the cluster, hold iff}
\[
\check{r}<\hat{r}
\]

\begin{proof}
A straightforward inspection. $\circlearrowleft$\bigskip
\end{proof}

\noindent \textbf{Lemma~3.} \emph{Let, for a B(a)-cluster
$Cl_x^\emph{\textbf{T}}(R,C)$, $\check{r}<\hat{r}$ hold. Then}
\[
|R^\textbf{T}(x)| > {1\over {\hat{r}-\check{r}}}  \quad (\sharp)
\]
\noindent \emph{is a sufficient condition for the
non-deterministic $|R^\emph{\textbf{T}}(x)|$-satisfiability of the
cluster.}
\begin{proof}

Let $|R^\textbf{T}(x)|=n > 0$, and  $(\sharp)$ holds. We can take
non-deterministically an ordering of $R^\textbf{T}(x):
R^\textbf{T}(x)= \{y_1,...,y_n\}$. Let then
\[\big\lfloor \check{r}.|R^\textbf{T}(x)|\big\rfloor =i\]
\[\big\lceil \hat{r}.|R^\textbf{T}(x)|\big\rceil\break =j\]

Due to $(\sharp)$, $j-i\geq 2$, and so $k$ exists such that $0\leq
i<k<j\leq n$. Then, labelling with $C$ $y_1,...,y_k$, and with
$\sim\!\!C$---$y_{k+1},...,y_n$ $\textbf{T}$-satisfies the
dominating part restrictions. It is straightforward check-up to
see this in different cases for $\check{r}$ and $\hat{r}$.


Obviously, the same result can be obtained in at least one
non-deterministic labelling of $x$-successors with $(\sim)C$,
without using any ordering of $R^\textbf{T}(x)$.
$\circlearrowleft$\bigskip
\end{proof}

\noindent\textbf{Lemma~4.} \emph{Let, for a B(a)-cluster
$Cl_x^\emph{\textbf{T}}(R,C)$, $\check{r}<\hat{r}$ and $(\sharp)$
hold, and the dominating part restrictions in the cluster be
$\emph{\textbf{T}}$-satisfied. Then any generating rule can always
be applied in a way to yield a completion tree
$\emph{\textbf{T}}'$ such that the cluster is
$\emph{\textbf{T}}'$-satisfied.}

\begin{proof}

Let $E\in\{ C,\sim\!C\}$. We will show that if we add a new
$R$-successor of $x$, and its labelling with $E$ makes one of the
dominating part restrictions c.t.-unsatisfied, then the labelling
with $\sim\!E$ keeps both of them c.t.-satisfied.

We present the proof in one of the possible cases for $\check{r}$,
$\hat{r}$, and $ E $. The whole proof consecutively shows, in just
the same manner, that the lemma holds in  all other cases.

First note that if the dominating part restrictions in a
B(a)-cluster are both with $M$ constructor, then
$\check{r}<\hat{r}$ is just the inequality $1^\circ$, and
${\hat{r}-\check{r}}$ takes the form $1-r_{M1}-r_{M2}$; if they
are one with $M$ and one with $W$ constructor---we have either
$2^\circ$, or $3^\circ$, and ${\hat{r}-\check{r}}$ takes the form
$r_{W}-r_{M}$; if they are both with $W$---$4^\circ$(a) and
${\hat{r}-\check{r}}$ takes the form $r_{W1}+r_{W2}-1$. \smallskip

Now let $\check{r}=r_1$, $\hat{r}=1-r_2$, and $E =\thinspace
\sim\!C$. Then $\check{r}<\hat{r}$ and ($\sharp$) have the forms:
\[
 r_1+r_2<1 \quad (\check{r}<\hat{r})\text{, and}
\]
\[
|R^\textbf{T}(x)|-r_1.|R^\textbf{T}(x)|-{r}_2.|R^\textbf{T}(x)|> 1
\quad (\sharp)
\]

From the $\textbf{T}$-satisfaction of the dominating part
restrictions we have also:

\[
|R^\textbf{T}(x, C)|>r_1.|R^\textbf{T}(x)|,\ 
\]
\[
|R^\textbf{T}(x,\sim\!C)|>r_2.|R^\textbf{T}(x)|
\]

Let the labelling with $\sim\!C$ of the new variable $y$ (yielding
the c.t. $\textbf{T}'$) make $M{r_1}R.C$ not
$\textbf{T}'$-satisfied, i.e., \begin{equation}
\label{eq_lemma-stab} |R^\textbf{T}(x, C)| = |R^{\textbf{T}'}(x,
C)| \leq r_1.|R^{\textbf{T}'}(x)| = r_1.(|R^\textbf{T}(x)|+1) =
r_1.|R^\textbf{T}(x)|+r_1
\end{equation}

So, let the generating rule labels $y$ with $C$ instead (yielding
the c.t. $\textbf{T}''$). Then the part restriction with $M{r_1}$
is obviously $\textbf{T}''$-satisfied:
\[
|R^{\textbf{T}''}(x, C)| = |R^\textbf{T}(x, C)|+1 >
r_1.|R^\textbf{T}(x)|+1
>
\]
\[
>r_1.|R^\textbf{T}(x)|+r_1 = r_1.(|R^\textbf{T}(x)|+1) = r_1.|R^{\textbf{T}''}(x)|
\]

To complete this case of the proof we have to show that the part
restriction with $M{r_2}$ is also $\textbf{T}''$-satisfied.
Indeed,

\[
|R^{\textbf{T}''}(x,\sim\!C)| = |R^\textbf{T}(x,\sim\!C)| =
|R^\textbf{T}(x)|-|R^\textbf{T}(x,
 C)| \overset{\textrm{from (\ref{eq_lemma-stab})}}\geq
\]
\[
\geq |R^\textbf{T}(x)|-r_1.|R^\textbf{T}(x)|-r_1
\overset{\textrm{from } (\sharp)}> 1+ r_2.|R^\textbf{T}(x)|-r_1
\overset{\textrm{from } \check{r}<\hat{r}}>
\]
\[
 > r_2.|R^\textbf{T}(x)|+r_2 =
r_2.(|R^\textbf{T}(x)|+1) = r_2.|R^{\textbf{T}''}(x)|
\circlearrowleft
\]
\end{proof}


Note that if one of the dominating part restrictions is with $W$
constructor, in both Lemma 3 and Lemma 4 $(\sharp)$ can be not
strict, but we take the stronger inequality to cover all cases.
\bigskip

\noindent \textbf{Lemma~5.} \emph{Let, for a B(b)-cluster
$Cl_x^\emph{\textbf{T}}(R,C)$, $r_1<r_4$ and $r_2<r_3$ hold, in
case the corresponding $M$ part restrictions are in the cluster.
Then the cluster is $\emph{\textbf{T}}$-satisfiable, and the
sufficient condition it to be non-deterministically
$\emph{\textbf{T}}$-satisfied is the number of $R$-neighbours of
$x$ to be devisable by the denominator
of $r_3$ and $r_4$ from the cluster.}

\begin{proof} (Idea) The condition for the non-deterministic
c.t.-satisfying of a B(b)-cluster can be easily seen if we look at
the simplest case in which such a cluster has part restrictions
with $r_3=r_4={1\over 2}$. Obviously, the condition both part
restrictions to be non-deterministically simultaneously
c.t.-satisfied is $x$ to have an even number of neighbours (if
there are at all) on the role in the cluster. $\circlearrowleft$
\bigskip
\end{proof}

\noindent \textbf{Lemma~6.} \emph{Let $x$ be a node of a
completion tree \emph{\textbf{T}}, and let all possible
applications of non-generating rules for $x$ be done. Then it can
be calculated a natural number $BAN_x\geq 1$, depending on the
concepts in $\mathcal{L}(x)$, and if $x$ is a successor of $u$,
possibly depending also on the concepts in $\mathcal{L}(u)$, and
having the following property: all part restrictions in
$\mathcal{L}(x)$ which are simultaneously
\emph{\textbf{T}}-satisfiable can be non-deterministically
simultaneously \emph{\textbf{T}}-satisfied when the number of
neighbours of $x$ on any role at the uppermost level in these part
restrictions becomes equal to $BAN_x$.}

\begin{proof}
The proof shows how the number $BAN_x$ with the above property is
calculated. Depending on the language there may be some
distinctions and features, and we will point them for the
considered DL. As a whole the calculation uses the main schema of
the indices technique.
\smallskip

First, we make some simplification. Any sub-concept of the form
$\exists R.C$ can be replaced with $\geqslant\! 1R.C$.
It is straightforward to show that the resulting concept is
satisfiable iff the initial one is. Also, the replacement takes
linear space and time in the length of the initial concept. So, we
assume in the following considerations without loss of generality
that the concept to be tested for satisfiability does not contain
sub-concepts of the form $\exists R.C$.
Thus, the concepts in $\mathcal{L}(x)$ triggering the generating
rules can be `at least' qualifying number restrictions or part
restrictions. We investigate the number of neighbours needed to
simultaneously c.t.-satisfy all part restrictions, taking into
account the (possible) presence of `at least' qualifying number
restrictions, c.t.-satisfied by $\geqslant$-rule.\smallskip

Next, we take care for the presence in the language of inverse
roles. If $x$ is an $\textsf{Inv}(S)$-successor of $u$, $u$ is an
$S$-neighbour of $x$. To ensure the correct calculations, we
assume that for each concept $E\in\mathcal{L}(u)$ we have in
$\mathcal{L}(x)$ the concept $\geqslant 1S.E$ (which would be,
obviously, already c.t.-satisfied, if present in
$\mathcal{L}(x)$).
\smallskip

Now we calculate $BAN_x$ looking only at $n$-s in `at least'
qualifying number restrictions, and $r$-s in part restrictions, to
guarantee the desired property. We take into account that `at
least' qualifying number restrictions can affect the
c.t.-satisfying of part restrictions. The application of
$\geqslant$-rule for $\geqslant\! nS.E$ results in $n\
S$-neighbours of $x$ labelled with $E$, and the c.t.-satisfaction
of $MrR.C$ will depend on that, in case $S\sqsubseteq^+ R$, and
$C$ and $E$ contradict each other (i.e., labelling with them the
same neighbour of $x$ leads to a clash). To deal with that, we use
the indices technique. Using the above notation, we want
$n<(1-r).|R^\textbf{T}(x)|$. If we have $W$ instead of $M$
constructor, the inequality is $n\leq r.|R^\textbf{T}(x)|$. It is
clear now that a sufficient condition for the (possible)
contradiction of $C$ with $E$ to be overcome, so that the part
restriction to be $\textbf{T}$-satisfied, is
$|R^\textbf{T}(x)|\geq\big\lfloor{n \over {1-r}}\big\rfloor+1$,
for $r$-s in $M$ part restrictions, and
$|R^\textbf{T}(x)|\geq\big\lfloor{n \over {r}}\big\rfloor+1$, for
$r$-s in $W$ part restrictions. Also, as all `at least'
restrictions in $\mathcal{L}(x)$ can affect some part restriction,
we consider:
\[
\bar n = \sum n,
\]
\noindent where the sum is on all $n$-s in all `at least'
restrictions in $\mathcal{L}(x)$, together with those assumed
instead of $\exists$-concepts, and due to $\mathcal{L}(u)$, as
described above. Due to the assumption concerning
$\mathcal{L}(u)$, $\bar n \geq 1$.
\medskip



Another feature that we have to consider is the role hierarchy. In
the following considerations we have in mind the fixed role
hierarchy, omitting to point it explicitly. For the correct
dealing with part restrictions we need to clearly distinguish the
roles, so we assume w.l.o.g. that if two roles have different
notations they are really different. In particular, if
$S\sqsubseteq^+ R$, we assume that $S$ is a real sub-role of $R$.

Now, we turn again to the indices technique. We know from the
considerations there that the satisfaction of all inequalities
$1^\circ$--$4^\circ$ by the indices is a necessary condition for
part restrictions' simultaneous c.t.-satisfaction, and that
failing some of the inequalities leads to an unavoidable clash (or
to infinite tree generation, if we do not stop the process). But
the latter is true only if the part restrictions are with one and
same role. If the roles are independent, the c.t.-satisfaction is
guarantied. A special case is when there is an inclusion of roles.
Let consider, for example, the concept $M{1\over 2}S.C\sqcap
M{1\over 2}R.(\sim\! C)$ labelling a node $x$, where
$S\sqsubseteq^+ R$. Nevertheless the $r$-s failed the
corresponding inequality ($1^\circ$), the concept can be easily
c.t.-satisfied, if $x$ has one $S$-neighbour, labelled with $C$
(c.t.-satisfying the first sub-concept), and two more
$R$-neighbours, labelled with $\sim\! C$ (all three neighbours
c.t.-satisfying the second sub-concept), as the $R$-neighbours
does not affect the c.t.-satisfaction of the first sub-concept. In
such a case the (number of) neighbours, added so c.t.-satisfy the
(sub-)concept with the sub-role serve as starting point for
c.t.-satisfying the one with the sup-role. The situation is
similar with the one with $\bar{n}$ neighbours necessary for the
`at least' sub-concepts.\smallskip

Let denote the set of roles at the uppermost level in part
restrictions in $\mathcal{L}(x)$ with $\textbf{R}^P_x$. All roles
in $\textbf{R}^P_x$ can be divided in non-intersecting chains with
respect to $\sqsubseteq^+$. Fully independent roles form (trivial)
chains by themselves, and there can be linearly ordered
($R_1\sqsubseteq^+ R_2\sqsubseteq^+ R_3$) and/or partially ordered
(e.g. $R_1\sqsubseteq^+ R_2$, $R_1\sqsubseteq^+ R_3$,
$R_1\sqsubseteq^+ R_4$, $R_2\sqsubseteq^+ S$, and
$R_3\sqsubseteq^+ S$) chains.

For two roles $R$ and $S$ from $\textbf{R}^P_x$, we call $R$ a
\emph{direct sub-role} of $S$ \emph{in} $\textbf{R}^P_x$ if
$R\sqsubseteq^+ S$ and there is no such $R_1 \in \textbf{R}^P_x$
that $R\sqsubseteq^+ R_1\sqsubseteq^+ S$. In this case $S$ is a
\emph{direct sup-role} of $R$ (in $\textbf{R}^P_x$). We call a
role in $\textbf{R}^P_x$ \emph{initial} if it has no sub-role in
$\textbf{R}^P_x$, and \emph{final} if it is not a sub-role of any
other role in $\textbf{R}^P_x$. Note that all independent roles
are both initial and final ones.
\smallskip

We first enumerate arbitrary the chains, and then enumerate the
roles in any chain, starting with the initial ones, and keeping
any sup-role to obtain a number, greater than the one of any of
its sub-roles. For the $j$-th role in the $i$-th chain ($i,j$-role
for short) we call the \emph{base number}, and denote it $B^i_j$,
the number of neighbours which are present when we start to
c.t.-satisfy the (sub-)concepts with that role. Also, for
$i,j$-role, we call the \emph{satisfaction number}, and denote it
$N^i_j$, the number of neighbours sufficient for all part
restrictions with that role to be simultaneously c.t.-satisfied.
Now, it is clear that $\bar{n}$ is the base number for all initial
roles, and for a non-initial role the base number is the sum of
satisfaction numbers of all its direct sub-roles. Thus, in the
partial ordered example above (if that is the only chain, or the
one with number $1$) the satisfaction number of $R_2$ is denoted
$N^1_2$, the base number of $S$ is then $B^1_5=N^1_2+N^1_3$, and
its satisfaction number is $N^1_5$.
\medskip

Now we will see how to calculate the number of neighbours,
sufficient for the simultaneous c.t.-satisfaction of all part
restrictions with the same role (in case their indices satisfy the
inequalities $1^\circ$--$4^\circ$). In the presence of complex
concepts the counteraction can be not directly seen, and a cluster
for a role $R$ can contain more than two (dominating)
counteracting concepts (and the domination is also not clear). We
denote with $Cl^i_j$ the cluster for the $i,j$-role (omitting in
the notation the tree $\textbf{T}$ and the node $x$), with all
part restrictions in $\mathcal{L}(x)$ with $i,j$-role at the
uppermost level, and let $|Cl^i_j|=p^i_j$.\smallskip

We assume that all part restrictions in $Cl^i_j$ can counteract.
To simplify considerations (and also to keep them correct, as if
we consider more than two part restrictions, the sum of their
$r$-s can be bigger not only than $1$, but also than $2$, etc.) we
reduce the four inequalities $1^\circ$--$4^\circ(a)$ just to
$1^\circ$ using the semantics of $W$. $WrR.C$ is (c.t.-)satisfied
in $x$ if $\sim\!\!C$ labels not greater than $r$ part of
$R$-neighbours of $x$, i.e., if $C$ labels not less than $1-r$
part of $R$-neighbours of $x$. But as we are looking for a
sufficient number of neighbours (not necessary the minimal one),
we can replace `not less than' with `greater than', which gives
the semantics of $M$ constructor. So we can replace each $r$ in
$\mathcal{L}(x)$ with $\bar{r}$, defined as follows:
\begin{eqnarray*}
  \bar{r} &=& \Big \{
  \begin{array}{ll}
    r & \textrm{, if r is in a part restriction with }M
    \textrm{ constructor} \smallskip\\
    1-r & \textrm{, if r is in a part restriction with }W
    \textrm{ constructor}
  \end{array}
\end{eqnarray*}

$\bar{r}$-s unify the shape of the sufficient conditions for
c.t.-satisfiability of concepts with $M$ and $W$ constructors, and
inequalities $2^\circ$, $3^\circ$ and $4^\circ(a)$ take the form
of $1^\circ$. Now the condition for overcoming the (counter)action
of all `at least' restrictions has the common shape
$|R^\textbf{T}(x)|\geq\big\lfloor{\bar{n} \over
{1-\bar{r}}}\big\rfloor+1$. \medskip

Also, as in the $Cl^i_j$ for $i,j$-role there are possibly up to
$p^i_j$ counteracting concepts, we have to consider all possible
$C^k_{p^i_j}$ combinations of concepts, $1\leq k\leq p^i_j$, with
indices, satisfying $1^\circ$. We include $k=1$ to capture the
case when $p^i_j=1$, i.e., when there is just one part restriction
in $Cl^i_j$, and there is no counteraction inside the cluster.
Thus, for any initial $i,j$-role we calculate its satisfaction
number to be:
\[
N^i_j=\max_{C^k_{p^i_j}} \Bigg(\bigg\lfloor{\bar{n} \over {1-{\sum
_{k_l} \bar{r}}}}\bigg\rfloor+k\Bigg),
\]

\noindent where the sum is on all $\bar{r}$-s in a fixed
combination $k_l$ with $k$ part restrictions, $1\leq l\leq
C^k_{p^i_j}$, such that $\sum _{k_l} \bar{r} < 1$, and the maximum
is taken on all possible such expressions for $1\leq k\leq p^i_j$
and $1\leq l\leq C^k_{p^i_j}$, if at least one such expression
exists. Note that the additive constant in the expression is $k$,
instead of $1$, as in the presence of $k$ counteracting $M$ part
restrictions we will need for their c.t.-satisfaction at least $k$
neighbours, no matter how small are the indices.

If there is no such combination $k_l$ with $k$ elements, $1\leq
k\leq p^i_j$, $1\leq l\leq C^k_{p^i_j}$, that $\sum _{k_l} \bar{r}
< 1$, we set $N^i_j=1$. \smallskip

Note that for initial roles we take $B^i_j=\bar n$, no matter on
which roles are `at least' restrictions. If we dare for
efficiency, we can reduce the base numbers of initial roles,
considering only `at least' restrictions on any fixed such role
and on its sub-roles.\medskip

In the common case for the non-initial $i,j$-role, we replace
$\bar{n}$ with $B^i_j$, as $\bar{n}$ is the base number only for
the initial roles:
\[
B^i_j = \sum_l N^i_l,
\]
\noindent where the sum is on the satisfaction numbers of all
direct sub-roles of $i,j$-role in the $i$-th chain.

For any non-initial $i,j$-role we calculate:
\[
N^i_j=\max_{C^k_{p^i_j}} \Bigg(\bigg\lfloor{B^i_j \over {1-{\sum
_{k_l} \bar{r}}}}\bigg\rfloor+k\Bigg),
\]
\noindent if there is at least one combination $k_l$ with $k$ part
restrictions, $1\leq k\leq p^i_j$, $1\leq l\leq C^k_{p^i_j}$, such
that $\sum _{k_l} \bar{r} < 1$, otherwise we set $N^i_j=1$.
\medskip

Thus, starting with the initial roles and taking $B^i_j=\bar n$ we
decide any possible counteractions between `at least' qualifying
number restrictions and part restrictions with initial roles. Next
we continue successively with the (direct) sup-roles taking for
any sup-role as its base number the sum of satisfaction numbers of
all its direct sub-roles. Thus we decide any possible
counteractions between part restrictions with the role, as well as
between part restrictions with the role from one side, and part
restrictions with all its sub-roles, from the other. Following
that way we end with the final roles in any chain.

When, in the formula for $N^i_j$, $k=p^i_j=1$, we have the case of
A-cluster (with no counteraction), and we guarantee
c.t.-satisfiability just taking into account the neighbours
already present (the base number). When $p^i_j > 1$ we have the
case of Ba-cluster, and we decide (anytime it is possible, with
$\sum _{k_l} \bar{r} < 1$) the counteractions in that cluster
(again, taking into account the neighbours already present). Thus,
having the number of neighbours on a final role equal to its
satisfaction number guarantees simultaneous c.t.-satisfiability of
all part restrictions with that role, which indices satisfy
inequalities $1^\circ$--$4^\circ(a)$, and of all part restrictions
(also satisfying the inequalities) with all sub-roles of that
role. So we define:\medskip

$BAN_{A\_Ba\!\_\thinspace cl}(x) =\max\Bigl(\big\{N^i_j\ |\
(i,j)\textrm{-role is final}\big\}\Bigr)$ \bigskip

It is left to capture all possible counteractions of type B(b).
Let:
\medskip
\begin{eqnarray*}
\begin{array}{ll}
    \mathcal{L}(x)^{W_1} := & \bigl\{W{r'}R.C\in \mathcal{L}(x)\ |
\textrm{ there are } k,\ k\geq 1,\ W \textrm{ part restrictions
} \smallskip\\
    & W{r}S.E\in \mathcal{L}(x),~\textrm{such that}~ \sum _{k} r +
r' = 1\bigr\}
  \end{array}
\end{eqnarray*}


$Den_1(x):=\bigl\{t\ |\ WrR.C \in \mathcal{L}(x)^{W_1},
~\textrm{and}~ r={s\over t}\bigr\}$ \medskip

\noindent i.e., $Den_1(x)$ contains the denominators of $r$-s in
the tuples of $W$ part restrictions in $\mathcal{L}(x)$  (with any
role and any concept) which can potentially B(b)-counteract. We
denote with $LCM_1(x)$ the least common multiple of the elements
of $Den_1(x)$, if $Den_1(x)\not = \emptyset$. Otherwise
$LCM_1(x)=1$.\medskip

We finally set:\medskip

$BAN_x=BAN_{A\_Ba\!\_\thinspace cl}(x)\times LCM_1(x)$
\medskip

The above considerations, together with Lemmas \ref{lemma_*-sat},
\ref{lemma_*-stab}, and \ref{lemma_Bb},
prove the lemma. $\circlearrowleft$
\end{proof}

Note that when the $BAN_x$ number of neighbours of $x$ on a role
in $\textbf{R}^P_x$ is reached in the application of some
generating rule, the part restrictions with that role may be
not---deterministically, with just that labelling of the
neighbours---c.t.-satisfied. But the lemma ensures that at that
moment \emph{there is} a labelling (which can be
non-deterministically chosen) c.t.-satisfying all other concepts
already c.t.-satisfied together with all part restrictions with
that role which are simultaneously c.t.-satisfiable, so there
would not be (an avoidable) clash with them. After that the only
possible clashes with part restrictions can come from part
restrictions which are not simultaneously c.t.-satisfiable (i.e.,
some of the inequalities $1^\circ$---$4^\circ$ fails for them). As
these clashes are unavoidable, no generation of new successors of
$x$ is necessary.\medskip

Let $|C|$ denote the length of the concept $C$, i.e., the number
of symbols in it, and let $|D|=n_0$, where $D$ is the concept
tested for satisfiability. Then for the set of all sub-concepts of
$D,\ sub(D)$, it holds $|sub(D)|\leq n_0$, and $|clos(D)|\leq
2n_0$, where $clos(D)=sub(D)\cup\big\{\!\!\sim\!C\ |\ C\in
sub(D)\big\}$.
\medskip

So we have for any $n$ in a concept in $clos(D)$, in case the
usual binary coding of numbers is used:
\smallskip

$n\leq 2^{n_0}$, and $\bar{n}\leq 2n_0.2^{n_0}$ \medskip

Also, each $r$ is a rational number practically in $({1\over
2^{n_0}},1-{1\over 2^{n_0}})$, and if ${\sum \bar{r}} < 1$ then
${\sum \bar{r}} \leq 1-{1 \over 2^{n_0}}$. So:
\smallskip

${1 \over {1-{\sum \bar{r}}}} \leq 2^{n_0}$ \medskip

As we have no more than $2n_0$ initial roles, as defined above,
and no more than $2n_0$ inclusions of roles from $\textbf{R}^P_x$
are possible, it is easy to show inductively that for any final
$i,j$-role $N^i_j\leq (2n_0)^{2n_0}.2^{(2n_0)^2}$. Thus \smallskip

$BAN_{A\_Ba\!\_\thinspace cl}(x) \leq
(2n_0)^{2n_0}.2^{2n_0^2+3n_0}$, and also \smallskip

$LCM_1(x) \leq 2^{2n_0^2}$, so we finally obtain: \smallskip

$BAN_x \leq (2n_0)^{2n_0}.2^{2n_0^2+3n_0}.2^{2n_0^2} =
(2n_0)^{2n_0}.2^{4 {n_0}^2 + 3n_0}$ \bigskip

In the following we will need the definition of \emph{pairwise
blocking}.

A node $x$ is \emph{directly blocked} if none of its ancestors is
blocked, and it has ancestors $x',\ y$ and $y'$ such that:

1. $x$ is a successor of $x'$, and $y$ is a successor of $y'$,

2. $\mathcal{L}(x)=\mathcal{L}(y)$ and
$\mathcal{L}(x')=\mathcal{L}(y')$, and

3. $\mathcal{L}(\langle x',x\rangle)=\mathcal{L}(\langle
y',y\rangle)$.

\noindent In this case we say that $y$ blocks $x$.

A node is \emph{indirectly blocked} if its predecessor is blocked.
Also, to avoid the expansion of a node $y$ witch is a successor of
a node $x$ and is merged by $\leqslant$-rule with
$\mathcal{L}(\langle x,y \rangle)=\emptyset$, $y$ is considered to
be indirectly blocked.

\begin{figure} 

\begin{tabular}{l l}
$\sqcap$-rule:\hspace{5.5mm} &If 1. $C_1\sqcap C_2\in
\mathcal{L}(x)$, $x$ is not indirectly blocked, and\\
&\hspace{2.2mm} 2. $\{C_1,C_2\}\not\subseteq\mathcal{L}(x)$ \\
&then $\mathcal{L}(x)\rightarrow\mathcal{L}(x)\cup\{C_1,C_2\}$ \medskip \\

$\sqcup$-rule: &If 1. $C_1\sqcup C_2\in
\mathcal{L}(x)$, $x$ is not indirectly blocked, and\\
&\hspace{2.2mm} 2. $\{C_1,C_2\}\cap\mathcal{L}(x)=\emptyset$ \\
&then $\mathcal{L}(x)\rightarrow\mathcal{L}(x)\cup\{E\}$ for some
$E\in\{C_1,C_2\}$ \medskip \\

$\exists$-rule: &If 1. $\exists R.C\in
\mathcal{L}(x)$, $x$ is not blocked, and\\
&\hspace{2.2mm} 2. $x$ has no $R$-neighbour $y$ with $C\in\mathcal{L}(y)$\\
&then create a new successor $y$ of $x$ with $\mathcal{L}(\langle
x,y\rangle)=\{ R\}$ and $\mathcal{L}(y)=\{C\}$ \medskip \\

$\forall$-rule: &If 1. $\forall R.C\in
\mathcal{L}(x)$, $x$ is not indirectly blocked, and\\
&\hspace{2.2mm} 2. there is an $R$-neighbour $y$ of $x$ with $C\not\in\mathcal{L}(y)$\\
&then $\mathcal{L}(y)\rightarrow\mathcal{L}(y)\cup\{C\}$ \medskip \\

$\forall_+$-rule: &If 1. $\forall S.C\in
\mathcal{L}(x)$, $x$ is not indirectly blocked, \\
&\hspace{2.2mm} 2. there is some transitive $R$ with $R\sqsubseteq^+ S$, and\\
&\hspace{2.2mm} 3. there is an $R$-neighbour $y$ of $x$ with $\forall R.C\not\in\mathcal{L}(y)$\\
&then $\mathcal{L}(y)\rightarrow\mathcal{L}(y)\cup\{\forall R.C\}$ \medskip \\

${choose}$- &If 1. $\boxtimes R.C\in \mathcal{L}(x)$, $x$ is not
indirectly blocked, and\\
rule: &\hspace{2.2mm} 2. there is an $R$-neighbour $y$ of $x$
with $\{C,\sim\! C\}\cap\mathcal{L}(y)=\emptyset$\\
&then $\mathcal{L}(y)\rightarrow\mathcal{L}(y)\cup\{E \}$ for some
$E\in\{C,\sim\! C\}$ \medskip \\

$\geqslant$-rule: &If 1. $\geqslant nR.C\in \mathcal{L}(x)$,
$x$ is not blocked, and\\
&\hspace{2.3mm} 2. $\sharp  R^\textbf{T}(x,C) < n$\\
&then create a new successor $y$ of $x$ with $\mathcal{L}(\langle
x,y\rangle)=\{ R\}$ and $\mathcal{L}(y)=\{C\}$ \medskip \\

$\leqslant$-rule: &If 1. $\leqslant nR.C\in \mathcal{L}(x)$,
$x$ is not indirectly blocked, and\\
&\hspace{2.3mm} 2. $\sharp  R^\textbf{T}(x,C) > n$ and there are
two $R$-neighbours $y$ and $z$ of $x$ with\\
&\hspace{6mm} $C\in\mathcal{L}(y)$, $C\in\mathcal{L}(z)$,
$\mathcal{L}(\langle x,y\rangle)\cap\mathcal{L}(\langle x,z\rangle)=\emptyset$, and\\
&\hspace{6mm} $y$ is not a predecessor of $x$ \\
& then 1.
$\mathcal{L}(z)\rightarrow\mathcal{L}(z)\cup\mathcal{L}(y)$,\\
&\hspace{6.3mm} 2. If $z$ is a predecessor of $x$\\
&\hspace{10.3mm} then $\mathcal{L}(\langle
z,x\rangle)\rightarrow\mathcal{L}(\langle z,x\rangle)\cup
\{\textsf{Inv}(S)|S\in\mathcal{L}(\langle x,y\rangle)\}$\\
&\hspace{10.3mm} else $\mathcal{L}(\langle
x,z\rangle)\rightarrow\mathcal{L}(\langle x,z\rangle)\cup
\mathcal{L}(\langle x,y\rangle)$\\
&\hspace{6.3mm} 3. $\mathcal{L}(\langle
x,y\rangle)\rightarrow\emptyset$ \medskip \\

${M}$-rule: &If 1. $MrR.C\in \mathcal{L}(x)$, $x$
is not blocked, and\\
&\hspace{2.2mm} 2. $\sharp  R^\textbf{T}(x,C)\leq r.\sharp
R^\textbf{T}(x)$ then calculate $BAN_x$, and if\\
&\hspace{2.2mm} 3. $\sharp  R^\textbf{T}(x)<BAN_x$\\
&then create a new successor $y$ of $x$ with $\mathcal{L}(\langle
x,y\rangle)=\{ R\}$ and $\mathcal{L}(y)=\{ C\}$\medskip \\

${W}$-rule: &If 1. $WrR.C\in \mathcal{L}(x)$, $x$
is not blocked, and\\
&\hspace{2.2mm} 2. $\sharp  R^\textbf{T}(x,\sim\! C) > r.\sharp
R^\textbf{T}(x)$ then calculate $BAN_x$, and if\\
&\hspace{2.2mm} 3. $\sharp  R^\textbf{T}(x)<BAN_x$\\
&then create a new successor $y$ of $x$ with $\mathcal{L}(\langle
x,y\rangle)=\{ R\}$ and $\mathcal{L}(y)=\{ C\}$
\end{tabular}

    \caption{The set of completion rules}\label{figure:rules_all}
\end{figure}

\medskip

\noindent \textbf{Theorem~1.} \emph{The presented tableaux
algorithm is a decision procedure for the satisfiability and
subsumption of $\mathcal{ALCQPIH}_{R^+}$-concepts with respect to
role hierarchies and terminologies.}

\begin{proof}
The correctness of the algorithm as a decision procedure is shown
by proving the lemmas that it is terminating, sound, and complete.

The termination is ensured by pair-wise blocking, and by
$BAN$-checkup, which guarantees finite (at most exponential---in
case of the usual binary coding of numbers) branching at a node.

Part restrictions can only affect the c.t.-satisfaction of `at
most' qualifying number restrictions in a node label. From the
other side, the influence of other concepts on part restrictions
is captured in the calculation of $BAN_x$. Thus, to show soundness
and completeness it suffices to consider, additionally to the
proofs, presented in Lemmas 16 and 17 in section 5.4 of
\cite{HST99}, the cases with the qualifying number restrictions
and part restrictions, and the cases with the new and modified
completion rules.

Since the \emph{internalization of terminologies} \cite{Baa90} is
still possible in the presence of part restrictions following the
technique presented in Section 3.1 of \cite{HST99}, we obtain the
above result. $\circlearrowleft$
\end{proof}

\begin{lemma}[Termination] \label{lemma:termination}
For each $\mathcal{ALCQPIH}_{R^+}$-concept $D$ and role hierarchy
$\mathcal{R}^+$, the tableaux algorithm terminates.
\end{lemma}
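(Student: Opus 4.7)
The plan is to follow the standard tableaux termination pattern adapted to the new ingredients: bound the depth of the completion tree via pair-wise blocking, bound the out-degree of each node via the $BAN_x$-check together with the `at least' number restrictions, and bound the growth of each node label by the finiteness of $clos(D)$ and $\textbf{R}_D$. Taken together, these three bounds force a finite completion tree on which only finitely many rules can fire, so the algorithm must halt.

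For the branching bound I would enumerate the possible sources of new successors at a node $x$: the $\geqslant$-rule contributes at most $n$ successors per $\geqslant n R.C \in \mathcal{L}(x)$; the $\exists$-rule at most one per $\exists R.C$; and the $M$- and $W$-rule become inapplicable as soon as $\sharp R^\textbf{T}(x)\geq BAN_x$ by condition~3 in their definition. Lemma \ref{lemma:BAN} gives the explicit bound $BAN_x\leq (2n_0)^{2n_0}\cdot 2^{4n_0^2+3n_0}$, so summing over all generating concepts in $\mathcal{L}(x)\subseteq clos(D)$ yields a single-exponential bound on the out-degree (with the usual binary coding of numbers). For the depth bound, pair-wise blocking, inherited from \cite{HST99}, still applies in the extended language: the triples $(\mathcal{L}(x'),\mathcal{L}(x),\mathcal{L}(\langle x',x\rangle))$ range over a finite set of size at most $2^{2|clos(D)|+|\textbf{R}_D|}$, so every branch whose length exceeds this number must contain a repetition, at which point the descendant is blocked and no further successors below it are generated.

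The main obstacle, and what distinguishes this argument from the one for $\mathcal{ALCQIH}_{R^+}$, is the dynamic character of the $BAN_x$-check combined with the $\leqslant$-rule that may merge successors. I would address this by observing that concept labels grow monotonically: the $\leqslant$-rule transfers $\mathcal{L}(y)$ into $\mathcal{L}(z)$ and only empties the edge label of the vacated node (which thereby becomes indirectly blocked), so no concept is ever removed from the surviving node. Consequently $BAN_x$ is recomputed only finitely often---bounded by the number of possible changes to $\mathcal{L}(x)$ and $\mathcal{L}(u)$, each a subset of $clos(D)$---and is non-decreasing between recomputations. The rule application strategy, which forbids a generating rule from firing while any non-generating rule is still applicable at $x$, guarantees that part rules are triggered only after $\mathcal{L}(x)$ is saturated for the moment, so each recomputation of $BAN_x$ is followed by at most $BAN_x$ further applications of part rules at $x$. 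After the final value of $BAN_x$ is reached, conditions~3 of the $M$-rule and $W$-rule become permanently false at $x$, and a standard finite-tree argument concludes termination.
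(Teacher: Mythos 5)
Your proposal is correct and follows essentially the same route as the paper's own proof: the same three-part decomposition into monotone, never-shrinking node labels (with the $\leqslant$-rule only vacating an edge whose lower node becomes indirectly blocked), an out-degree bound obtained from the $BAN_x$-check of the part rules together with the explicit bound from Lemma~\ref{lemma:BAN}, and a depth bound from pair-wise blocking inherited from the $\mathcal{ALCQIH}_{R^+}$ argument. Your extra observation that $BAN_x$ is recomputed only finitely often (bounded by the possible changes to $\mathcal{L}(x)$ and $\mathcal{L}(u)$ within $clos(D)$) is exactly the remark the paper makes just after Lemma~\ref{lemma:BAN} rather than inside the termination proof itself.
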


\begin{proof}
The modified $\geqslant$-rule and $\leqslant$-rule and part rules
keep all properties of the completion rules which are shown in
Lemma 15 in \cite{HST99} to guarantee the termination of tableaux
algorithm for any $\mathcal{ALCQIH}_{R^+}$-concept. Namely, any
sequence of rule applications is finite, as:

\begin{enumerate}
    \item The completion rules never remove nodes from the tree, and also
    never remove concepts from node labels. Edge labels can be changed
    only by the $\leqslant$-rule which only transfers roles between edge
    labels of merged nodes. In that, the node below the
    $\emptyset$-labelled edge becomes blocked.\smallskip

    \item If a part rule is applied for a part restriction,
    labelling some node $x$, it adds no more than $BAN_x$ successors
    of $x$ for a fixed role. With the notations and considerations from the end
    of Lemma \ref{lemma:BAN}, $BAN_x$ is limited by $(2n_0)^{2n_0}.2^{4 {n_0}^2 + 3n_0}$,
    and the number of roles is limited by $2n_0$. So, the out-degree of the tree, caused
    by part restrictions (and also the overall out-degree of the tree, as in the
    calculation of $BAN_x$ the successors, possibly added by the $\exists$-rule
    and $\geqslant$-rule are taken into account) is bounded by $\mathcal{O}(2^{n_0^2})$.
    \smallskip

    \item Part rules do not affect in a special way the paths in
    the tree (but just in the way the $\geqslant$-rule does),
    so the considerations, showing that paths are of length
    at most $2^{4n_0k}$, where $k=|\textbf{R}_D| \leq 2n_0$, still hold.
$\circlearrowleft$\bigskip
\end{enumerate}
\end{proof}

\begin{lemma}[Soundness] \label{lemma:sound}
If the completion rules can be applied to a concept $D$ of
$\mathcal{ALCQPIH}_{R^+}$ to yield a complete and clash-free
completion tree with respect to the role hierarchy
$\mathcal{R}^+$, then $D$ has a tableau with respect to
$\mathcal{R}^+$.
\end{lemma}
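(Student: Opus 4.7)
The plan is to follow the construction in Lemma~17 of \cite{HST99}, adding verification of the three properties that are new or modified in the $\mathcal{ALCQPIH}_{R^+}$-tableau (11, 12, and 13). Given a complete and clash-free completion tree $\textbf{T}$ for $D$ with respect to $\mathcal{R}^+$, I would build a tableau $T=(\textbf{S},\mathcal{L},\mathcal{E})$ by unravelling $\textbf{T}$ into paths: $\textbf{S}$ consists of sequences $[x_0,\ldots,x_n]$ in which each $x_i$ is either a successor in $\textbf{T}$ of $x_{i-1}$, or, when $x_{i-1}$ is blocked by some node $y$, a successor of $y$ in $\textbf{T}$. The label of a path is the label of its final tree node, and $\mathcal{E}$ is populated from the parent--child edges and their inverses, closed under role inclusion and, for transitive roles, under transitivity.

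Properties 1--10 (those inherited from $\mathcal{ALCQIH}_{R^+}$) are verified essentially as in \cite{HST99}. The crucial fact I would exploit for the new properties is that the roles occurring in qualifying number restrictions and in part restrictions are simple, so their extension in $\mathcal{E}$ is obtained directly from tree edges without taking any transitive closure. Hence, for any simple role $R$ and any path $s=[x_0,\ldots,x_n]\in \textbf{S}$, the $R$-neighbours of $s$ in $T$ stand in bijection with the $R$-neighbours of $x_n$ in $\textbf{T}$; in particular $\sharp R^T(s)=\sharp R^\textbf{T}(x_n)$ and $\sharp R^T(s,C)=\sharp R^\textbf{T}(x_n,C)$ for every $C$ appearing in those neighbours' labels.

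With this correspondence in hand, properties 11 and 12 follow immediately from clash-freeness: if $MrR.C\in\mathcal{L}(s)$ then $MrR.C\in\mathcal{L}(x_n)$ and, by the absence of $CL3$, $\sharp R^\textbf{T}(x_n,C)>r.\sharp R^\textbf{T}(x_n)$; the bijection above then yields $\sharp R^T(s,C)>r.\sharp R^T(s)$, and property 12 is analogous using $CL4$. The modified property 13 follows from completeness: since the $choose$-rule is inapplicable at every non-blocked node, every $R$-neighbour $y$ of any $x$ with $\boxtimes R.C\in\mathcal{L}(x)$ already has $C$ or $\sim\!C$ in $\mathcal{L}(y)$, and for blocked nodes the same holds through the label-identity supplied by pair-wise blocking.

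The main delicate point I anticipate is precisely guaranteeing that the unravelling preserves the neighbour counts for simple roles in the presence of the modified $\geqslant$- and $\leqslant$-rules and of blocking. For the $\leqslant$-rule, merging transfers one edge label into another and empties the first; the condition $\mathcal{L}(\langle x,y\rangle)\cap\mathcal{L}(\langle x,z\rangle)=\emptyset$ ensures that no role multiplicity is lost. For blocking, the pair-wise condition forces a blocked node and its blocker to share not only node labels but also edge labels to their predecessors, so successors generated by the `jump' in the unravelling reproduce exactly the $R$-neighbour structure needed on any simple $R$. Once these two points are justified carefully, Lemma~\ref{lemma:BAN} (via the $BAN_x$-check inside the part rules) guarantees that any simultaneously c.t.-satisfiable set of part restrictions is actually satisfied in a clash-free complete $\textbf{T}$, so the absence of $CL3$ and $CL4$ transfers to properties 11 and 12, completing the proof.
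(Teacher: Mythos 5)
Your proposal follows essentially the same route as the paper: unravel the complete clash-free tree into paths (the paper uses pairs of nodes, you record the blocked node and jump at the next step, which is an equivalent bookkeeping), and reduce properties 11--13 to clash-freeness and the inapplicability of the $choose$-rule via the exact neighbour-count preservation $\sharp R^T(s,C)=\sharp R^{\textbf{T}}(x_n,C)$ for simple roles, which is precisely the paper's Proposition~\ref{proposition:num} and the technical heart of the argument that you correctly identify but leave to be ``justified carefully''. The one stray point is the closing appeal to Lemma~\ref{lemma:BAN}: soundness needs no such guarantee, since the absence of $CL3$ and $CL4$ is already part of the hypothesis that $\textbf{T}$ is clash-free.
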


\begin{proof}
We use the path construction considered in Lemma 16 in
\cite{HST99} to cope with the qualifying number restrictions, and
show that it deals successfully with part restrictions also.

Let $\textbf{T}$ be a complete and clash-free completion tree. We
consider a \emph{path} to be a sequence of pairs of nodes of
$\textbf{T}$ of the form $p=[(x_0,x_0'),...,(x_n,x_n')]$. For such
a path we define two functions: $\textsf{Tail}(p):=x_n$ and
$\textsf{Tail}'(p):=x_n'$. We use also $[p|(x_{n+1},x_{n+1}')]$ to
denote the above path $p$ extended with the pair of nodes
$(x_{n+1},x_{n+1}')$. Then we define the set
$\textsf{Paths}(\textbf{T})$ inductively:
\begin{itemize}
    \item [$\bullet$]
    For the root $x_0$ of $\textbf{T}$, $[(x_0,x_0)]\in \textsf{Paths}
    (\textbf{T})$, and \medskip

    \item [$\bullet$]
    For a path $p\in\textsf{Paths}(\textbf{T})$ and a node
    $z$ in $\textbf{T}$ \smallskip
    \begin{itemize}
        \item [--]
        if $z$ is a successor of $\textsf{Tail}(p)$ and $z$
        is not blocked, then $[p|(z,z)]\in \textsf{Paths}
        (\textbf{T})$, or \smallskip

        \item [--]
        if, for some node $y$ in $\textbf{T}$, $y$ is a
        successor of $\textsf{Tail}(p)$ and $z$
        blocks $y$, then $[p|(z,y)]\in\textsf{Paths}(\textbf{T})$.
    \end{itemize}
\end{itemize}

Thus, by the construction, if $p\in\textsf{Paths}(\textbf{T})$ and
$p=[p'|(x,x')]$, then $x$ is not blocked, $x'$ is blocked iff
$x'\not = x$, and $x'$ is not indirectly blocked. Also,
$\mathcal{L}(x)=\mathcal{L}(x')$ always holds.  \medskip

Now we define a tableau  $T=({\textbf{S}}, \mathcal{L},
\mathcal{E})$ as follows:
\begin{eqnarray*}
  \textbf{S} &=& \textsf{Paths}(\textbf{T}) \\
  \mathcal{L}(p) &=& \mathcal{L}(\textsf{Tail}(p)) \\
  \mathcal{E}(R) &=& \big \{\langle p,q\rangle\in \textbf{S}\times
  \textbf{S}\ \bigg |\
  \begin{array}{l}
    \textrm{either}\ q=[p|(x,x')]\ \textrm{and }
    x'\textrm{ is an }R\textrm{-successor of }\textsf{Tail}(p)\smallskip\\
    \textrm{or}\ p=[q|(x,x')]\ \textrm{and }
    x'\textrm{ is an } \textsf{Inv}(R)\textrm{-successor of }\textsf{Tail}(q)\\
  \end{array}
  \big \}
\end{eqnarray*}

In order to show that $T$ is a tableau for $D$ with respect to
$\mathcal{R}^+$ we have to show that $T$ satisfies the properties
1--13 from the tableaux definition. The proofs for properties 1--8
are exactly the same as those given in Lemma 16 in \cite{HST99}.
We complete the proof by showing that $T$ satisfies properties
9--13 which are connected with the modified and the new completion
rules. To prove in a uniform way that properties 9, 10, 11 and 12
hold, we use the following proposition.

\begin{proposition} \label{proposition:num}
Let $p\in\emph{\textsf{Paths}}(\emph{\textbf{T}})$, $R\in
\emph{\textbf{R}}_D$, and $C\in clos(D)$. Then it holds $\sharp
R^T(p,C)= \sharp R^\emph{\textbf{T}}(\emph{\textsf{Tail}}(p),C)$.
\end{proposition}

\begin{proof} Let $\textsf{Tail}(p)=x$. We consider several sets:
\begin{eqnarray*}
    Succ^p &=& \{q\in R^T(p,C)\ |\ q=[p|(y,y')]\} \\
    Pred^p &=& \{q\in R^T(p,C)\ |\ p=[q|(x,x')]\} \\
    Succ^x &=& \{y\in R^{\textbf{T}}(x,C)\ |\ y\textrm{ is }
    R\textrm{-successor of }x\} \\
    Pred^x &=& \{y\in R^{\textbf{T}}(x,C)\ |\ x\textrm{ is }
  \textsf{Inv}(R)\textrm{-successor of }y\}
\end{eqnarray*}

As $R^T(p,C)=Succ^p\cup Pred^p$, and $Succ^p\cap
Pred^p=\emptyset$, it holds
\[ \sharp R^T(p,C)= \sharp Succ^p + \sharp Pred^p
\]
In the same way $R^\textbf{T}(x,C)=Succ^x\cup Pred^x$, and
$Succ^x\cap Pred^x=\emptyset$, so that
\[ \sharp R^\textbf{T}(x,C)= \sharp Succ^x + \sharp Pred^x
\]

$Pred^x$ is either empty, or a singleton, and due to the
construction of paths and the definition of $\mathcal{E}(R)$, so
is $Pred^p$. Let $p=[q|(x,x')]$ and $\textsf{Tail}(q)=z$.\medskip

Let $Pred^x\not = \emptyset$ with $y\in Pred^x$. Then $x$ is
$\textsf{Inv}(R)$-successor of $y$ and $C\in \mathcal{L}(y)$. We
will show that $q\in Pred^p$.
\begin{itemize}
    \item If $x=x'$ then $x$ is a successor of $z$, so
$y=z$, $x$ is $\textsf{Inv}(R)$-successor of $z,\ C\in
\mathcal{L}(z)$, and $q\in Pred^p$.

    \item If $x\not =x'$ then $x'$ is blocked by
$x$. As $x'$ is a successor of $z$, from the blocking conditions
it follows that $x'$ is $\textsf{Inv}(R)$-successor of $z,\ C\in
\mathcal{L}(z)=\mathcal{L}(y)$, and again $q\in Pred^p$.
\end{itemize}

Now let $q\in Pred^p$. Then $x'$ is $\textsf{Inv}(R)$-successor of
$z$ and $C\in \mathcal{L}(z)$. We will show for the predecessor
$y$ of $x$ that $y\in Pred^x$.
\begin{itemize}
    \item If $x=x'$ then $x$ is $\textsf{Inv}(R)$-successor of $z$, so
$y=z$, $x$ is $\textsf{Inv}(R)$-successor of $y,\ C\in
\mathcal{L}(y)$, and $y\in Pred^x$.

    \item If $x\not =x'$ then $x'$ is blocked by
$x$. As $x'$ is $\textsf{Inv}(R)$-successor of $z$, from the
blocking conditions it follows that $x$ is
$\textsf{Inv}(R)$-successor of $y,\ C\in
\mathcal{L}(y)=\mathcal{L}(z)$, and again $y\in Pred^x$.
\end{itemize}

We showed that $Pred^x\not=\emptyset$ iff $Pred^p\not=\emptyset$,
so that $\sharp Pred^x=\sharp Pred^p$.\medskip

It is left to show that $\sharp Succ^x=\sharp Succ^p$.\smallskip

If $y\in Succ^x$ then $q=[p|(y',y)]$, where either $y'=y$, or $y'$
blocks $y$, is in $Succ^p$, as in both cases
$C\in\mathcal{L}(q)=\mathcal{L}(y')=\mathcal{L}(y)$. Also, if
$y_1\not=y_2$ are in $Succ^x$ then $q_1=[p|(y',y_1)]$ and
$q_2=[p|(y'',y_2)]$ are different elements of $Succ^p$ even if
$y_1$ and $y_2$ are both blocked by the same node, i.e., $y'=y''$.
So $\sharp Succ^x \leq \sharp Succ^p$.\medskip

Now let $q=[p|(y,y')]\in Succ^p$. Then $\textsf{Tail'}(q)=y'$ is
$R$-successor of $x=\textsf{Tail}(p)$, and $C \in \mathcal{L}(y')=
\mathcal{L}(y)= \mathcal{L}(q)$, so $y'\in Succ^x$. We will show
that $\textsf{Tail}'$ is injective on $Succ^p$. Suppose, there are
$q_1\not=q_2$ in $Succ^p$ such that $q_1=[p|(y_1,y')],\
q_2=[p|(y_2,y')]$. If $y'$ is not blocked than $y_1=y'=y_2$, what
contradicts with $q_1\not=q_2$. But if $y'$ is blocked than again
$y_1=y_2$, as each of them has to block $y'$ and there is just one
blocking node for any node---again contradiction with
$q_1\not=q_2$. So for any element in $Succ^p$ there is a different
element in $Succ^x$, and $\sharp Succ^x = \sharp Succ^p$
$\circlearrowleft$.
\medskip

\end{proof}

\begin{corollary} \label{corollary:num}
Let $p\in\emph{\textsf{Paths}}(\emph{\textbf{T}})$, and $R\in
\emph{\textbf{R}}_D$. Then $\sharp R^T(p)= \sharp
R^\emph{\textbf{T}}(\emph{\textsf{Tail}}(p))$.
\end{corollary}

\begin{proof}
In the trivial case $\sharp R^T(p)= \sharp
R^{\textbf{T}}({\textsf{Tail}}(p))=0$.

In the non-trivial case, we can consider, as in the proposition
\ref{proposition:num}, the sub-sets of predecessors and of
successors on $R$ of both $p$ and $\textsf{Tail}(p)$. Both
sub-sets of predecessors are at the same time empty, or
singletons.

The sub-sets of successors are again at the same time empty, or
not. If $\textsf{Tail}(p)$ has at least one successor, then this
successor is generated by some generating rule triggered by a
concept of the form $\boxtimes R.C\in
\mathcal{L}(\textsf{Tail}(p))$. As $\textbf{T}$ is complete and
clash-free, due to $choose$-rule any neighbour of
$\textsf{Tail}(p)$ is labelled with either $C$, or $\sim\! C$, so
we can use the proposition \ref{proposition:num} to obtain what we
want. $\circlearrowleft$
\end{proof}

\begin{itemize}

    \item [$\bullet$] \textbf{Property 9}. Let $\leqslant\! nR.C\in\mathcal{L}(p)
    =\mathcal{L}(\textsf{Tail}(p))$. As $\textbf{T}$
    is clash-free, it holds $\sharp R^\textbf{T}(\textsf{Tail}(p),C)
    \leq n$. From Proposition \ref{proposition:num} we obtain $\sharp R^T(p,C) =
    \sharp R^\textbf{T}(\textsf{Tail}(p),C) \leq n$. \medskip

    \item [$\bullet$] \textbf{Property 10}. Let $\geqslant nR.C\in\mathcal{L}(p)
    =\mathcal{L}(\textsf{Tail}(p))$. As $\textbf{T}$
    is complete, $\geqslant$-rule is inapplicable, so
    $\sharp R^\textbf{T}(\textsf{Tail}(p),C)
    \geq n$. From Proposition \ref{proposition:num} $\sharp R^T(p,C) =
    \sharp R^\textbf{T}(\textsf{Tail}(p),C) \geq n$. \medskip

    \item [$\bullet$] \textbf{Property 11}. Let $MrR.C\in
    \mathcal{L}(p)=\mathcal{L}(\textsf{Tail}(p))$. As $\textbf{T}$
    is clash-free, it holds $\sharp R^\textbf{T}(\textsf{Tail}(p),C)
    > r.\sharp R^\textbf{T}(\textsf{Tail}(p))$. Making use of
    Proposition \ref{proposition:num} and Corollary
    \ref{corollary:num}, we obtain $\sharp R^T(p,C) =
    \sharp R^\textbf{T}(\textsf{Tail}(p),C) >
    r.\sharp R^\textbf{T}(\textsf{Tail}(p)) = r.\sharp R^T(p)$. \medskip

    \item [$\bullet$] \textbf{Property 12}. Let $WrR.C\in
    \mathcal{L}(p)=\mathcal{L}(\textsf{Tail}(p))$. As $\textbf{T}$
    is clash-free, it holds $\sharp R^\textbf{T}(\textsf{Tail}(p),\sim\! C)
    \leq r.\sharp R^\textbf{T}(\textsf{Tail}(p))$. Again, using
    Proposition \ref{proposition:num} and Corollary
    \ref{corollary:num}, we obtain $\sharp R^T(p,\sim\! C) =
    \sharp R^\textbf{T}(\textsf{Tail}(p),\sim\! C) \leq
    r.\sharp R^\textbf{T}(\textsf{Tail}(p)) = r.\sharp R^T(p)$. \medskip

    \item [$\bullet$] \textbf{Property 13}. Let $\boxtimes R.C \in
    \mathcal{L}(p)=\mathcal{L}(\textsf{Tail}(p))=\mathcal{L}(\textsf{Tail'}(p))$,
    and $\langle p,q\rangle\in\mathcal{E}(R)$. If
    $q=[p|(x,x')]$ then $x'\textrm{ is an }R\textrm{-successor of
    }\textsf{Tail}(p)$, and, as the $choose$-rule is not applicable,
    it holds $\{C,\sim\! C\}\cap \mathcal{L}(x')\not = \emptyset$.
    Since $\mathcal{L}(q)=\mathcal{L}(x)=\mathcal{L}(x')$, it also holds
    $\{C,\sim\! C\}\cap \mathcal{L}(q)\not =
    \emptyset$. If $p=[q|(x,x')]$ then
    $x'\textrm{ is an } \textsf{Inv}(R)$-successor of
    $\textsf{Tail}(q)$. As $x'$ is not indirectly blocked and the
    $choose$-rule is not applicable, $\{C,\sim\! C\}\cap \mathcal{L}
    (\textsf{Tail}(q))\not = \emptyset$ holds. Thus, $\{C,\sim\! C\}
    \cap \mathcal{L}(q)\not = \emptyset$ holds again.
$\circlearrowleft$ \medskip

\end{itemize}

\end{proof}

\begin{lemma}[Completeness] \label{lemma:complete}
If an $\mathcal{ALCQPIH}_{R^+}$-concept $D$ has a tableau with
respect to the role hierarchy $\mathcal{R}^+$, then the completion
rules can be applied to $D$ in a way to yield a complete and
clash-free completion tree with respect to $\mathcal{R}^+$.
\end{lemma}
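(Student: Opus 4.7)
The plan is to mimic the standard technique from Lemma~17 of \cite{HST99}: use a given tableau $T = (\textbf{S}, \mathcal{L}_T, \mathcal{E})$ with $D \in \mathcal{L}_T(s_0)$ as an oracle that steers every non-deterministic choice of the algorithm, and maintain a mapping $\pi$ from nodes of the completion tree $\textbf{T}$ into $\textbf{S}$ such that the following invariants hold throughout the construction: (i)~$\mathcal{L}(x) \subseteq \mathcal{L}_T(\pi(x))$; (ii)~if $R \in \mathcal{L}(\langle x,y\rangle)$ and $y$ is not the result of a merge, then $\langle \pi(x),\pi(y)\rangle \in \mathcal{E}(R)$; and (iii)~if $x \neq y$ are both successors of some $z$ with $\mathcal{L}(\langle z,x\rangle) \cap \mathcal{L}(\langle z,y\rangle) \neq \emptyset$, then $\pi(x) \neq \pi(y)$. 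Start by setting $\pi(x_0) := s_0$, which satisfies (i) because $\mathcal{L}(x_0) = \{D\}$.

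I would then proceed by induction on the length of the rule application sequence, showing that each rule can be applied while preserving the invariants. The deterministic rules ($\sqcap$, $\forall$, $\forall_+$) and the $\sqcup$-/$choose$-rules are routine using tableau Properties 2--4, 6, and 13 respectively: for each disjunctive choice, pick the branch that the tableau label of $\pi(x)$ already contains. For the $\geqslant$-rule, given $\geqslant\!nR.C \in \mathcal{L}(x)$, Property~10 guarantees $\sharp R^T(\pi(x),C) \geq n$; pick a fresh image $t \in R^T(\pi(x),C)$ not yet used as $\pi$-image of an existing $R$-successor of $x$ and set $\pi(y):=t$ for the new successor $y$. For the $\leqslant$-rule, the merge candidate $z$ is chosen so that $\pi(y) = \pi(z)$, which by (iii) forces the edge labels to disjoint; Property~9 ensures such a pair exists whenever the rule is applicable. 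The $\exists$-rule is analogous using Property~5.

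The crucial new content concerns the part rules, and I expect this to be the main obstacle. Suppose $MrR.C \in \mathcal{L}(x)$ and the $M$-rule is applicable because $\sharp R^\textbf{T}(x,C) \leq r \sharp R^\textbf{T}(x)$ and $\sharp R^\textbf{T}(x) < BAN_x$; by Property~11, $\sharp R^T(\pi(x),C) > r \sharp R^T(\pi(x))$, so there is a tableau neighbour $t$ of $\pi(x)$ with $C \in \mathcal{L}_T(t)$ not yet used as an image, and I map the freshly created successor to such a $t$ (analogous for the $W$-rule using Property~12). The delicate point is to argue that when the $BAN_x$ threshold is reached and the part rule stops firing, no clash of type $CL3$ or $CL4$ remains: here I invoke Lemma~\ref{lemma:BAN}, which guarantees that some non-deterministic labelling of the $BAN_x$ neighbours simultaneously $\textbf{T}$-satisfies every simultaneously $\textbf{T}$-satisfiable part restriction in $\mathcal{L}(x)$; because the tableau itself exhibits such a simultaneous satisfaction at $\pi(x)$ by Properties~11 and~12, steering the $choose$-rule by $\pi$ realises exactly this good labelling.

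Finally, I would verify that the constructed $\textbf{T}$ is clash-free: $CL1$ is impossible because $\mathcal{L}(x) \subseteq \mathcal{L}_T(\pi(x))$ and tableaux satisfy Property~1; $CL2$ is ruled out by Property~9 together with invariant (iii), which prevents two distinct $R$-successors of $x$ from collapsing to the same tableau image and thereby keeps $\sharp R^\textbf{T}(x,C)$ bounded by $\sharp R^T(\pi(x),C) \leq n$; and $CL3$, $CL4$ are excluded at completion time by the argument of the previous paragraph. Completeness of $\textbf{T}$ then follows from termination (Lemma~\ref{lemma:termination}): the process halts, and when it halts no completion rule is applicable, which is exactly the definition of a complete tree. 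Blocking does not interfere because pair-wise blocked nodes inherit, via their blocker, the required $\pi$-image witnessing all tableau properties.
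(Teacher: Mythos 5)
Your proposal follows essentially the same route as the paper: the given tableau guides the non-deterministic choices via a mapping $\pi$ satisfying exactly the invariant $(*)$ used there, each new or modified rule (including the part rules, via Properties 11 and 12 and the injectivity of $\pi$ on same-role neighbours) is shown to preserve it, and termination then yields a complete, clash-free tree. Your explicit remarks on why no $CL3$/$CL4$ clash survives once the $BAN_x$ threshold halts the part rules (invoking Lemma~\ref{lemma:BAN}) spell out a point the paper leaves implicit, but the argument is otherwise the paper's own.
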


\begin{proof}
Let $T=({\textbf{S}}, \mathcal{L}, \mathcal{E})$ be a tableau for
$D$ with respect to $\mathcal{R}^+$. The tableau is used to guide
the application of the non-deterministic rules. For that, a
function $\pi$ mapping the nodes of a tree $\textbf{T}$ to
$\textbf{S}$ is defined inductively such that, for each $x,y$ in
$\textbf{T}$, the following holds:
\begin{eqnarray*}
\begin{array}{ll}
\begin{array}{l}
  \mathcal{L}(x)\subseteq \mathcal{L}(\pi(x)) \smallskip\\
  \textrm{if }y\textrm{ is an }R\textrm{-neighbour of }x \textrm{ then }
  \langle\pi(x),\pi(y)\rangle \in \mathcal{E}(R) \smallskip\\
  \textrm{if }y\textrm{ and }z\textrm{ are neighbours of }x\textrm{ and }
  \mathcal{L}(\langle x,y\rangle)\cap\mathcal{L}(\langle
  x,z\rangle)\neq\emptyset\\
  \textrm{then }\pi(y)\not = \pi(z) \\
\end{array} &\hspace{2mm}\Biggr \}\ (*)
\end{array}
\end{eqnarray*}

Let $\textbf{T}$ be a completion tree and $\pi$ be a function that
satisfies $(*)$. The claim is that if a rule is applicable to
$\textbf{T}$ then it is applicable in a way that yields a
completion tree $\textbf{T}'$ and a function $\pi'$ that satisfy
$(*)$. We will show that for the new and modified rules.

\begin{itemize}

    \item [$\bullet$] $choose$-rule. Let $\boxtimes R.C \in
    \mathcal{L}(x)$, and the $choose$-rule is applicable to $x$
    for the $R$-neighbour $y$ of $x$. So $\{C,\sim\! C\}\cap \mathcal{L}(y) = \emptyset$.
    From $(*)$
    $\boxtimes R.C \in \mathcal{L}(\pi(x))$ and $\langle\pi(x),\pi(y)\rangle
    \in\mathcal{E}(R)$. Then the Property 13. from the tableaux
    definition implies $\{C,\sim\! C\}\cap \mathcal{L}(\pi(y))\not =
    \emptyset$. Thus, $choos$-rule can add the concept from $\{C,\sim\! C\}$
    being in $\mathcal{L}(\pi(y))$ to $\mathcal{L}(y)$, so that $\mathcal{L}(y) \subseteq
    \mathcal{L}(\pi(y))$ holds. \medskip

    \item [$\bullet$] $\geqslant$-rule. Let $\geqslant\!\!nR.C\in\mathcal{L}(x)\subseteq
    \mathcal{L}(\pi(x))$, and the $\geqslant$-rule is applicable to $x$.
    So, $\sharp R^{\textbf{T}}(x,C) < n$.
    From the other side, from the tableau we have $\sharp R^{T}(\pi(x),C) \geq
    n$.

    From $\sharp R^{T}(\pi(x),C) \geq n > \sharp R^{\textbf{T}}(x,C)$ it follows that
    there is some $t_0 \in R^{T}(\pi(x),C)$ which is not a $\pi$-image of
    any $y \in R^{\textbf{T}}(x,C)$. Suppose $x$ has an $R$-neighbour
    $y_0 \not\in R^{\textbf{T}}(x,C)$, such that $t_0=\pi(y_0)$. Due to the rule
    application strategy, the $choose$-rule has been applied
    before the $\geqslant$-rule becomes applicable, and $C \not\in \mathcal{L}(y_0)$,
    so $\sim\! C \in \mathcal{L}(y_0)\subseteq \mathcal{L}(t_0)$. As
    $C \in \mathcal{L}(t_0)$, the latter contradicts with Property 1
    of tableaux.

    So, $t_0$ is not a $\pi$-image of any $y\in
    R^{\textbf{T}}(x)$, and $\geqslant$-rule can add new $R$-successor $y_0$ of $x$ with
    $\mathcal{L}(y_0)=\{ C\}$, and $\pi$ can be extended to map $y_0$ to $t_0$,
    so that $(*)$ holds. \medskip

    \item [$\bullet$] $\leqslant$-rule. Let
    $\leqslant nR.C\in\mathcal{L}(x)\subseteq
    \mathcal{L}(\pi(x))$, and the $\leqslant$-rule is applicable to $x$.
    So, $\sharp R^{\textbf{T}}(x,C) > n$.
    From the tableau we have $\sharp R^{T}(\pi(x),C) \leq n$.

    From $\sharp R^{T}(\pi(x),C) \leq n < \sharp R^{\textbf{T}}(x,C)$ it follows that
    there are at least two nodes $y, z \in R^{\textbf{T}}(x,C)$ for which
    $\pi(y)=\pi(z)$, and at least one of them is an $R$-successor of $x$, let it be
    $y$. As $(*)$ holds, $\mathcal{L}(\langle x,y\rangle)\cap\mathcal{L}(\langle
    x,z\rangle)=\emptyset$ $\big($so there is a role $S\sqsubseteq^+R$ such that
    $S\in\mathcal{L}(\langle x,y\rangle)$, or either $S\in\mathcal{L}(\langle
    x,z\rangle)$ or $\textsf{Inv}(S)\in\mathcal{L}(\langle
    z,x\rangle)$$\big)$. Then the $\leqslant$-rule can be applied to
    merge $y$ and $z$ into $z$ without violating $(*)$.\medskip

    \item [$\bullet$] $M$-rule. Let $MrR.C\in\mathcal{L}(x)\subseteq \mathcal{L}(\pi(x))$,
    and the $M$-rule is applicable to $x$. So, $\sharp R^{\textbf{T}}(x,C)\leq
    r.\sharp R^{\textbf{T}}(x)$. From the other side, from the tableau we
    have $\sharp R^{T}(\pi(x),C) > r.\sharp R^{T}(\pi(x))$.

    From $(*)$ $\pi$ is an injection from $R^{\textbf{T}}(x)$ to $R^{T}(\pi(x))$,
    and from $R^{\textbf{T}}(x,C)$ to $R^{T}(\pi(x),C)$. So, $\sharp R^{\textbf{T}}(x)
    \leq \sharp R^{T}(\pi(x))$. Thus, we obtain $\sharp R^{\textbf{T}}(x,C) < \sharp
    R^{T}(\pi(x),C)$. This shows that there is an individual
    $s\in R^{T}(\pi(x),C)$ for which there is not a node $y\in
    R^{\textbf{T}}(x,C)$ such that $s=\pi(y)$.

    Suppose, there is $y\in
    R^{\textbf{T}}(x)$ such that $s=\pi(y)$. But, as $M$-rule is
    applicable for $x$, $choose$-rule is inapplicable, and $C\not\in
    \mathcal{L}(y)$, so $\sim\!
    C\in\mathcal{L}(y)\subseteq\mathcal{L}(\pi(y))=\mathcal{L}(s)$.
    This contradicts with Property 1 from tableaux definition, as
    $C\in\mathcal{L}(s)$. So, there is no $R$-neighbour $y$ of $x$ with
    $s=\pi(y)$, and $M$-rule can be applied to generate a new
    $R$-successor $y$ of $x$ with $\mathcal{L}(y)=\{C\}$. Then, the extension
    of $\pi$ which maps $y$ to $s$ holds $(*)$.\medskip

    \item [$\bullet$] $W$-rule. Let $WrR.C\in\mathcal{L}(x)\subseteq \mathcal{L}(\pi(x))$,
    and the $W$-rule is applicable to $x$. So, $\sharp R^{\textbf{T}}(x,\sim\!
    C) > r.\sharp R^{\textbf{T}}(x)$. From the other side, from the tableau we
    have $\sharp R^{T}(\pi(x),\sim\! C) \leq r.\sharp R^{T}(\pi(x))$.

    Note that these inequalities can be rewritten,
    due to the inapplicability of the $choose$-rule, the
    inclusion of $\mathcal{L}(x)$ in $\mathcal{L}(\pi(x))$,
    Property 1, and Property 13 of the tableau, as $\sharp R^{\textbf{T}}(x,C) <
    (1-r).\sharp R^{\textbf{T}}(x)$ and $\sharp R^{T}(\pi(x),C) \geq
    (1-r).\sharp R^{T}(\pi(x))$. Next, the same arguments as for the
    $M$-rule hold.\medskip
\end{itemize}
Since $T$ is a tableau for $D$, there is  an $s_0\in \textbf{S}$
with $D\in\mathcal{L}(s_0)$. So, the above claim gives the
completeness, as the initial tree consisting of a single node
$x_0$ with $\mathcal{L}(x_0)=\{D\}$, together with the function
$\pi:\ \pi(x_0):=s_0$, hold $(*)$. Then the rules, applied in a
way to preserve $(*)$, will yield a complete (as any sequence of
rule application is finite) and clash-free completion tree. This
completes the proof of Theorem $\ref{main-DP}$ $\circlearrowleft$

\end{proof}

\end{document}